%%%%%%%%%%%%%%%%%%%%%%%%%%%%%%%%%%%%%%%%%%%%%%%%%%%%%%%%%%%%%%%%%%%%%%%%%%%%%%%%%%%%%%%%%%%%%%%%%%%%%%%%%%%%%%%
%ACM

\documentclass[letterpaper]{sig-alternate-10pt}
\pdfoutput=1

\usepackage{rotating}
\usepackage{amsfonts}
\usepackage{amssymb}
\usepackage{amsmath}
\usepackage[official]{eurosym}
%\setcounter{tocdepth}{3}
%\usepackage{graphicx}
%\usepackage{url}
%\usepackage{algorithmic,algorithm} % dikh mou pros8hkh
%\usepackage[cmex10]{amsmath} % dikh mou pros8hkh

%\newdef{theorem}{Theorem}
%%\newdef{proposition}{Proposition}
%\newdef{corollary}{Corollary}
%%\newtheorem{corollary}{Corollary}[section]
%\newtheorem{proposition}{Proposition}[sectio\href{}{}n]

\newtheorem{theorem}{Theorem}[section]

\newtheorem{corollary}{Corollary}[section]
\newtheorem{proposition}{Proposition}[section]
\newtheorem{lemma}{Lemma}[section]
\newtheorem{definition}{Definition}[section]

\newcommand{\ie}[0]{\textit{i.e.}, }
\newcommand{\eg}[0]{\textit{e.g.}, }
\newcommand{\etal}[0]{\textit{et al.}}

% Allow shared affiliation
\def\sharedaffiliation{%
\end{tabular}
\begin{tabular}{c}}

\begin{document}

\title{Leveraging information in vehicular parking games}
%
% You need the command \numberofauthors to handle the 'placement
% and alignment' of the authors beneath the title.
%
% For aesthetic reasons, we recommend 'three authors at a time'
% i.e. three 'name/affiliation blocks' be placed beneath the title.
%
% NOTE: You are NOT restricted in how many 'rows' of
% "name/affiliations" may appear. We just ask that you restrict
% the number of 'columns' to three.
%
% Because of the available 'opening page real-estate'
% we ask you to refrain from putting more than six authors
% (two rows with three columns) beneath the article title.
% More than six makes the first-page appear very cluttered indeed.
%
% Use the \alignauthor commands to handle the names
% and affiliations for an 'aesthetic maximum' of six authors.
% Add names, affiliations, addresses for
% the seventh etc. author(s) as the argument for the
% \additionalauthors command.
% These 'additional authors' will be output/set for you
% without further effort on your part as the last section in
% the body of your article BEFORE References or any Appendices.

\numberofauthors{3} %  in this sample file, there are a *total*
% of EIGHT authors. SIX appear on the 'first-page' (for formatting
% reasons) and the remaining two appear in the \additionalauthors section.
%

% You can go ahead and credit any number of authors here,
% e.g. one 'row of three' or two rows (consisting of one row of three
% and a second row of one, two or three).
%
% The command \alignauthor (no curly braces needed) should
% precede each author name, affiliation/snail-mail address and
% e-mail address. Additionally, tag each line of
% affiliation/address with \affaddr, and tag the
% e-mail address with \email.
%
%%
\author{
\alignauthor Evangelia Kokolaki\\
       \email{evako@di.uoa.gr}
\alignauthor Merkouris Karaliopoulos\\
       \email{mkaralio@di.uoa.gr}
\alignauthor Ioannis Stavrakakis\\
       \email{ioannis@di.uoa.gr}
%} \and  % use '\and' if you need 'another row' of author names
%
\sharedaffiliation
 \affaddr{Department of Informatics and Telecommunications}\\
 \affaddr{National and Kapodistrian University of Athens, Athens, Greece}
 %\affaddr{Athens, Greece}
}

%       \affaddr{National \& Kapodistrian University of Athens\\Dept. Informatics \& Telecommunications}\\
%       \affaddr{Ilissia, 157 84}\\
%       \affaddr{Athens, Greece}\\

% Just remember to make sure that the TOTAL number of authors
% is the number that will appear on the first page PLUS the
% number that will appear in the \additionalauthors section.

%\author{Paper ID number: 292}

\maketitle
%\begin{abstract}
%Our paper approaches the parking assistance service in urban environments as an instance of service provision in non-cooperative network environments. We propose normative abstractions for the way drivers pursue parking space and the way they respond to \emph{partial} or \emph{complete information} for parking demand and supply as well as specific pricing policies on public and private parking facilities. The drivers are viewed as strategic agents who make rational decisions attempting to minimize the cost of the acquired parking spot. We formulate the resulting games as instances of \emph{resource selection games} and derive their equilibria under various expressions of uncertainty for the overall parking demand. The efficiency of the equilibrium states is compared against the optimum assignment that could be determined by a centralized entity and conditions are derived for minimizing the related \emph{price of anarchy} value. Our results provide theoretical insights to the dynamics emerging from the drivers' behavior as well as useful hints for the pricing and practical management of on-street and private parking resources. On top of that, they contribute further theoretical understanding of the effective information mechanisms within competitive contexts.
%\end{abstract}

\begin{abstract}
Our paper approaches the \emph{parking assistance service} in urban environments as an instance of service provision in non-cooperative network environments. We propose normative abstractions for the way drivers pursue parking space and the way they respond to \emph{partial} or \emph{complete information} for parking demand and supply as well as specific pricing policies on public and private parking facilities. The drivers are viewed as strategic agents who make rational decisions attempting to minimize the cost of the acquired parking spot. We formulate the resulting games as \emph{resource selection games} and derive their equilibria under different expressions of uncertainty about the overall parking demand. The efficiency of the equilibrium states is compared against the optimal assignment that could be determined by a centralized entity and conditions are derived for minimizing the related \emph{price of anarchy} value. Our results provide useful hints for the pricing and practical management of on-street and private parking resources. More importantly, they exemplify counterintuitive \emph{less-is-more effects} about the way information availability modulates the service cost, which underpin general competitive service provision settings and contribute to the better understanding of effective information mechanisms.
\end{abstract}

%% A category with the (minimum) three required fields
%\category{H.4}{Information Systems Applications}{Miscellaneous}
%%A category including the fourth, optional field follows...
%\category{D.2.8}{Software Engineering}{Metrics}[complexity measures, performance measures]

%\terms{Theory}

\keywords{Parking assistance service, vehicular networks, parking games, uncertainty, price of anarchy} % NOT required for Proceedings

\section{Introduction}\label{Introduction}

In various mobile applications, networked entities (\ie agents) are called to autonomously decide on how to best coexist with each other in the network, \ie \emph{cooperate with} and/or \emph{compete against} eachother, to optimally serve their interests.
%ensure that their networking role will serve primarily their interest.
The agents' co-action may actually take various forms and pertain to different network functions depending on the particular network paradigm. For example, in autonomic networks, each agent (node) is called to decide whether to dispose or not its own scarce resources (\ie energy, bandwidth and storage space) in favor of others' welfare, anticipating their support in due course. Other instances explicitly discriminate between the resource/service provider and resource/service consumer; namely, there is a network-external operator that manages the service provision and a number of user nodes that seek to get access to it at minimum cost. This paper attempts to delineate and explore the dynamics that arise in this last type of competitive settings.

A crucial determinant for these dynamics is the information different nodes possess about the service resource availability and the demand for it.
%Under this notion of competition for external service, user nodes may take advantage of any information source to diminish the %competition artefacts.
Indeed, any such information becomes an asset that shapes the nodes' behavior and modulates their incentive to compete. The  information factor affects the final outcome of nodes' interactions, and, eventually, the benefit that is accrued by them as well as the service provider, \eg her income when she charges her service. Technically, this information may be announced centrally, even by the service provider herself, or opportunistically collected by and distributed among the network of service consumers.

These competitive contexts are well captured in auction-based frameworks \cite{adwords}\cite{Akyildiz06}. In general, sellers-auctioneers draw on auction mechanisms to allocate both divisible and non-divisible resources among multiple agents, with the aim to maximize either their own revenue or the social welfare.
%Potential buyers express their interest in the auctioned resources via their bids, anticipating to be awarded at a minimum cost. The auctioneer collects the requests and bids of the agents and determines who is endowed with resource and at what cost. Advanced and novel auction-based mechanisms in competitive networking applications, such as spectrum allocation services \cite{Akyildiz06} and online sponsored search engines \cite{adwords}, have started to gain ground as promising solutions for resolving the competition.
Typically, the auctioneer avails private information on the auction set-up that, when published, can modulate the bidders' strategies, escalating or moderating competition, and hence determining the outcome of the auction procedure, \ie resource winners and their payments \cite{Milgrom82}\cite{Simon09}.
In this paper we study another instance of competitive service provision involving vehicular nodes within urban environments: the \emph{parking assistance service}. Vehicle drivers seek and compete for the cheaper but scarce on-street parking space, while
the parking service provider aims at maximizing the parking capacity utilization and his revenue.
%\textbf{[de dinetai emfasi sto kerdos, giati oi doyleies poy paratithentai den yponooyn kapoia pedio agoras/pwlisis]},
%whereas drivers wish for cost-effective parking space within their, usually overlapped, area of interest, after short search time.
As with auctions, the information about the resources and the demand for them may vary and shape the behavior of competing agents. On the one hand, the parking service provider may collect and broadcast different amounts of information to the drivers; whereas, vehicles may exploit wireless communication and information sensing technologies to gain themselves partial knowledge about the location and/or vacancy of parking spots.

%or otherwise, opportunistically distributed among vehicles, giving rise to misbehaving phenomena.
The way the opportunistic exchange of information among vehicles may sharpen competition is studied in \cite{Kokolaki11} and \cite{Delot09}. In \cite{Kokolaki11}, Kokolaki \etal~simulate a fully cooperative opportunistic parking space assistance scheme, whereby each parking spot is equipped with a sensor device providing information about its occupancy status.
%Vehicles, properly equipped with short-range wireless interfaces and adequate storage and processing capacity, collect and exchange information on the location and status of each spot they encounter.
It is shown that the full exchange of information upon encounters of vehicles may give rise to synchronization effects (vehicles are steered towards similar locations), sharpen competition, and eventually render the search process inefficient. Anticipating this effect, Delot \etal~propose in \cite{Delot09} a distributed virtual parking space reservation mechanism, whereby vehicles vacating a parking spot selectively distribute this information to their proximity. Hence, they mitigate the competition for the scarce parking spots by controlling the diffusion of the parking information among drivers.
%In particular, according to their light reservation protocol, every time a vehicle vacates a parking spot communicates this event to drivers within his range. Among the potential answers he may receives from the drivers, he first chooses a single one, applying distance criteria, and then acknowledges his decision to the particular driver. Although the ``reservation'' cannot be considered reliable due to the autonomic nature of the networking nodes and the distributed operation of the protocol, it shows to reduce the competition, compared to pure $V2V$ information dissemination networks.

%Motivated by the concerns that the processes of information dissemination (benefiting service discovery) and competition (reducing the service delivery prospects) are coupled and counter-acting, we elaborate the parking search application in a broader view, formalizing the competition effects within non-cooperative networking environments. The questions we pose are:
%
%\begin{itemize}
%  \item how different types of information (complete or partial) on the parking demand and supply modulate drivers' incentive to compete,
%  \item how the information impacts on the cost that incurs to the drivers and the revenue accruing to the parking operator,
%  \item which key-parameters optimize drivers' decisions for their activity,
%  \item what principles run the effective parking information mechanisms.
%\end{itemize}
Drawing on the parking search assistance service, our paper seeks to systematically explore a broader phenomenon, evidenced in several instances of service provision within non-cooperative networking environments: the double-edged impact of information on the overall service efficiency, \ie its assistance with resource/service discovery, on the one hand, and the sharpening of competition for it, on the other. Questions we address are: How do different types of information (complete or partial) on the parking demand and supply modulate drivers' incentive to compete? How does such information affect the cost that drivers incur and the revenue accruing for the parking service operator?

%The significant parking demand in city centers make city councils to draw on both public and private parking facilities to respond to the parking needs of the car volumes that daily visit popular in-city destinations. Under the conventional parking search practice, drivers are called to decide either to compete for the cheaper but scarce on-street parking spots or head for the more expensive private parking lot. In the first case, they run the risk of failing to get a spot and having to \emph{a posteriori} take the more expensive alternative, this time suffering the additional cost in time, fuel consumption (and stress) of the failured attempt. In the absence of, possibly centrally-driven, coordination, the drivers pursue selfishly to minimize the cost of access to parking facilities. However, the intuitive decision to head for the cheaper or free-of-cost on-street parking space, combined with the scarcity of public parking capacity in urban curbside of typical center areas, give rise to \emph{tragedy of commons} effects \cite{Hardin68} and highlight the game-theoretic dynamics behind the parking spot selection problem.

We take a game-theoretic approach and view the drivers as rational selfish agents that pursue to minimize the cost they will pay for acquired parking space. The drivers choose to either compete for the cheaper but scarce on-street parking spots or head for the more expensive private parking lot. In the first case, they run the risk of failing to get a spot and having to \emph{a posteriori} take the more expensive alternative, this time suffering the additional \emph{cruising} cost in terms of time, fuel consumption (and stress) of the failured attempt. Drivers make their decisions drawing on information of variable accuracy about the parking demand (number of drivers) and supply (number of parking spots and pricing policy), which is broadcast from the parking service operator. With this common knowledge at hand, drivers react rationally seeking to minimize the cost of their decisions. The announced information impacts on the resulting driver interaction and ultimately the total cost paid. Thus, its systematic manipulation provides useful hints for the realization of effective centralized information mechanisms.

We formulate the parking spot selection problem as an instance of resource selection games, abstracting from spatial and temporal variations in parking demand and supply, in Section \ref{sec:game}. We then analyze the game variant with complete information about parking demand in Section \ref{sec:cinfo}, where we derive the equilibrium behaviors of the drivers and compare the induced social cost against the optimal one via the Price of Anarchy metric, leaving proofs for the Appendix. We relax the assumption for the availability of complete information and derive the corresponding analysis in Section \ref{sec:incinfo}. Indeed, in Section \ref{sec:results}, we show that the optimization of the equilibrium social cost is feasible by properly choosing the charging cost and the location of the private parking facilities. Less intuitively, assessing the impact of information, we present less-is-more phenomena arguing that partial information maximizes drivers benefit, compared to complete knowledge. We outline related research in Section \ref{Related_work} and we close the discussion in Section \ref{sec:discussion}, drawing parallels between the game-theoretic assertions for drivers' behavior and insights from the cognitive psychology domain.

\section{The parking spot selection game}\label{sec:game}

In the parking spot selection game, the set of players consists of drivers who circulate within the center area of a big city in search of parking space. Typically, in these regions, parking is completely forbidden or constrained in whole areas of road blocks so that the real effective curbside is significantly limited (see Fig. \ref{fig:park_map}). The drivers have to decide whether to drive towards the scarce low-cost (controlled) public parking spots or the more expensive private parking lot (we see all local lots collectively as one). All parking spots that lie in the same public or private area are assumed to be of the same value for the players --we discuss this assumption further in Section \ref{sec:discussion}. Thus, the decisions are made on the two \emph{sets} of parking spots rather than individual set items. The two sets jointly suffice to serve all parking requests.

%%mobicom
%\begin{figure*}[t]
%%\begin{center}
%\vspace{-15pt}
%\hspace{9pt}%130
%\begin{tabular}{cc}
%  \includegraphics[scale=0.3,angle=90]{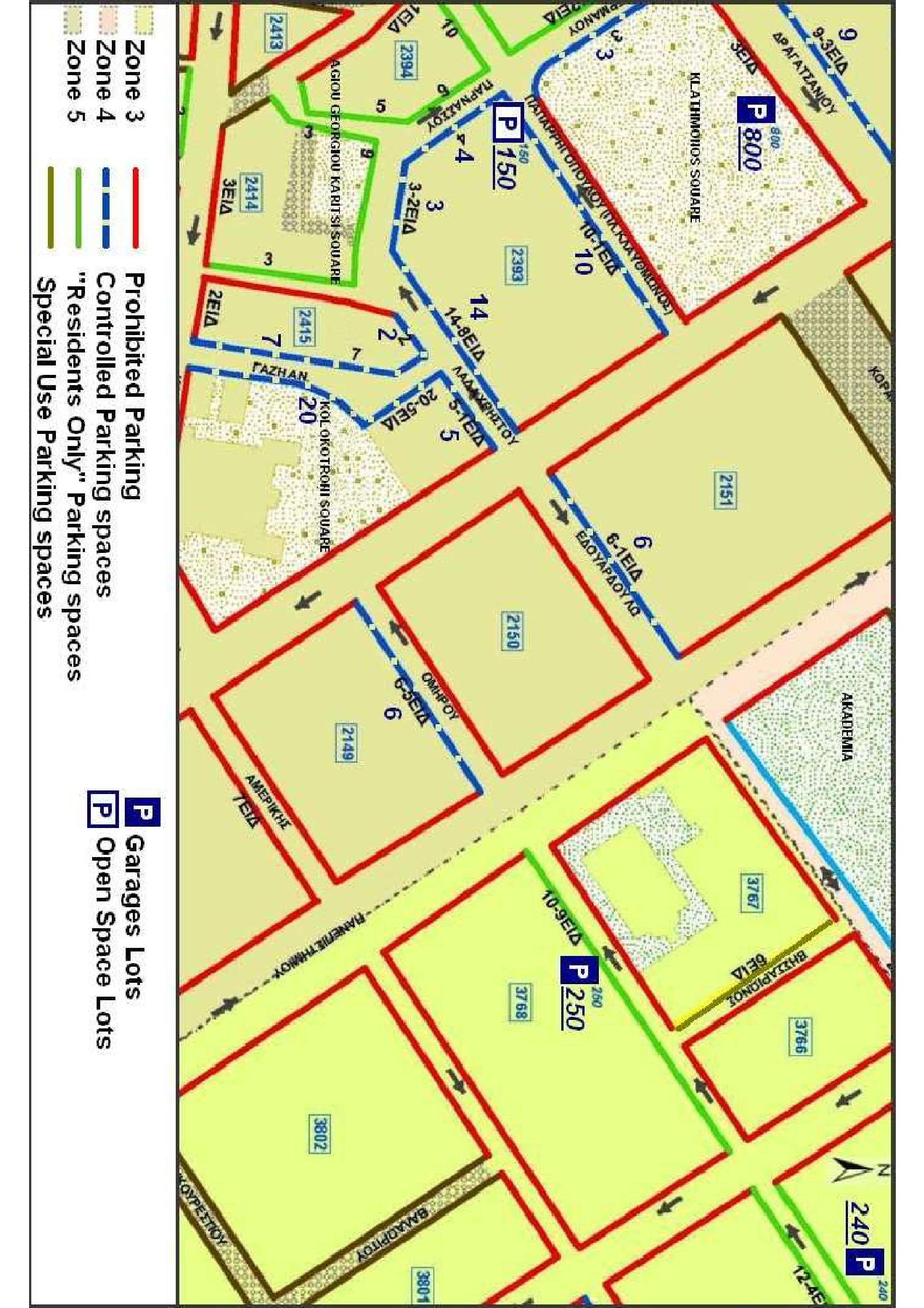}
%  &
%  \includegraphics[scale=0.25,angle=90]{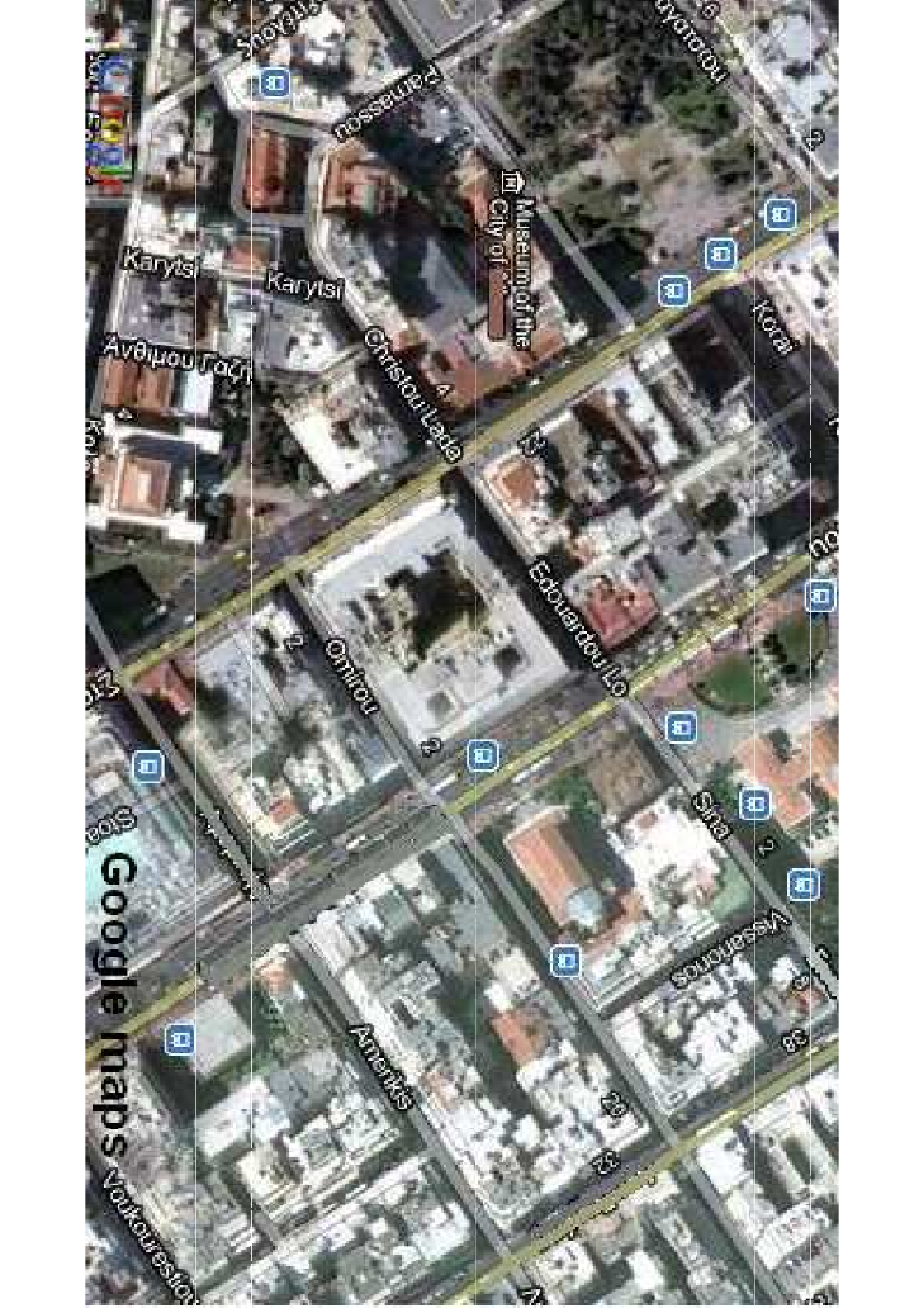}
%  \end{tabular}
%  \\
%%\begin{scriptsize} Parking area  \end{scriptsize}
%%\end{center}
%\vspace{-10pt}
%\caption{Parking and geographical map (thanks to Google maps) of the centre area of a big European city. Dashed lines show metered controlled (public) parking spots, whereas ``P'' denotes private parking facilities. The map illustrates, as well, the capacity of both parking options.\label{fig:park_map}}
%\end{figure*}

\begin{figure*}[t]
%\begin{center}
\vspace{-15pt}
\hspace{9pt}%130
\begin{tabular}{cc}
  \includegraphics[scale=0.3,angle=90]{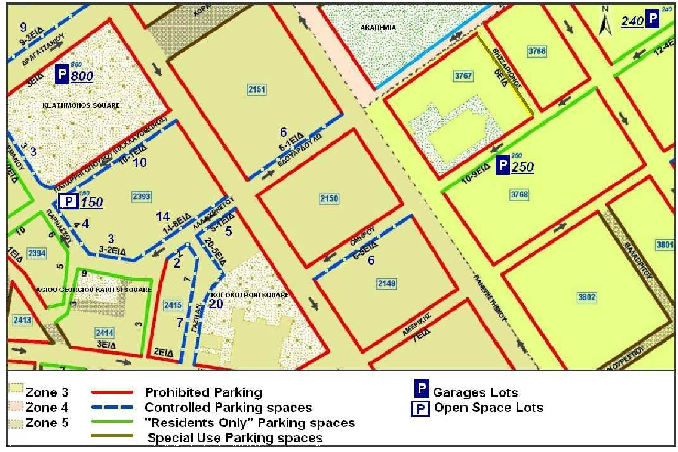}
  &
  \includegraphics[scale=0.25,angle=90]{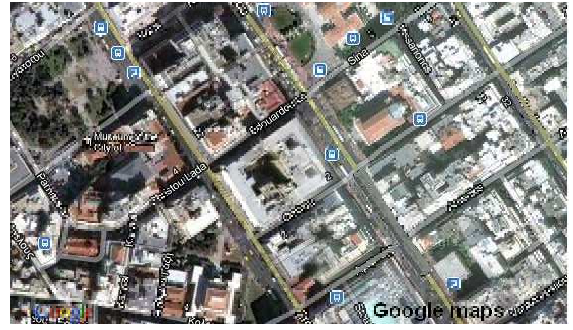}
  \end{tabular}
  \\
%\begin{scriptsize} Parking area  \end{scriptsize}
%\end{center}
\vspace{-12pt}
\caption{Parking and geographical map (thanks to Google maps) of the centre area of a big European city. Dashed lines show metered controlled (public) parking spots, whereas ``P'' denotes private parking facilities. The map illustrates, as well, the capacity of both parking options.\label{fig:park_map}}
\vspace{-2pt}
\end{figure*}

We observe drivers' behavior within a particular time window over which they reach this parking area. In general, these synchronization phenomena in drivers' flow occur at specific time zones during the day \cite{cityeu}. Herein, we account for morning hours or driving in the area for business purposes coupled with long parking duration. Thus, the collective decision making on parking space selection can be formulated as an instance of the strategic \emph{resource selection games}, whereby $N$ players (\ie drivers) compete against each other for a finite number of common resources (\ie public parking spots) \cite{Ashlagi06}. More formally, the one-shot parking spot selection game is defined as follows:

%Particularly, the symmetric resource selection game is defined as follows:
%\newdef{definition}{Definition}
%\begin{definition}\label{def:resource_game}
%A symmetric \emph{Resource Selection Game} is a tuple $\Gamma(N)=(R,(w_{j})_{j=1}^{R})$, where:
% \begin{itemize}
%\item $\mathcal{N}=\{1,...,N\}, N>1$ is the set of players,
%\item $\mathcal{R}=\{1,...,R\}$, $R\geq1$, the set of resources,
%\item $A_{i}=\mathcal{R}$, the set of possible actions,
%\item $w_{j}:\mathcal{N}\rightarrow\mathbb{R}$, the cost function for every user of resource j,
% \item $c_{i}^{N}(x)=w_{a_{i}}(\sigma_{a_{i}}(x))$, the cost function for every player i, where $x=(a_{i},a_{-i})\in\times_{i=1}^{N}A_{i}$ is the action profile and $\sigma_{a_{i}}(x)$ the number of players with action $a_{i}\in{A_{i}}$.
% \end{itemize}
%\end{definition}

%\newdef{definition}{Definition}
\begin{definition}\label{def:parking_game}
A \emph{Parking Spot Selection Game} is a tuple
$\Gamma(N)=(\mathcal{N},\mathcal{R},(w_{j})_{j\in(pub,priv)})$, where:
\begin{itemize}
\item $\mathcal{N}=\{1,...,N\}$, $N>1$ is the set of drivers who seek for parking space,
\item $\mathcal{R}=\mathcal{R}_{pub}\cup\mathcal{R}_{priv}$ is the set of parking spots; $\mathcal{R}_{pub}$ is the set of public spots, with $R=|\mathcal{R}_{pub}|\geq1$; $\mathcal{R}_{priv}$ the set of private spots, with $|\mathcal{R}_{priv}|\geq N$,
\item $A_{i}=\{public, private\}$, is the action set for each driver $i\in \mathcal{N}$,
\item $w_{pub}()$ and $w_{priv}()$ are the cost functions of the two actions, respectively\footnote{Note that the cost functions are defined over the action set of each user; in the original definition of resource selection games in \cite{Ashlagi06}, cost functions are defined over the resources but the resource set coincides with the action set.}.
\end{itemize}
\end{definition}

The parking spot selection game comes under the broader family of \emph{congestion games}\footnote{Readers who are more familiar with game theory will notice a resemblance to the atomic variant of Pigou's selfish routing example \cite{Pigou20}. Pigou's paths correspond to the two parking alternatives, one having a high user-independent use cost and the other a cost that scales with the number of users (albeit not linearly).}. %The players are identical with respect to the game formulation.
The players' payoffs (here: costs) are non-decreasing functions of the \emph{number} of players competing for the parking capacity rather than their identities and common to all players.
%hence, they are determined by \emph{what} is played rather than \emph{who} plays it and are common to all  players.
%means that they share the same action space while the cost to play a specific action is shaped from the action profile (namely, the actions being played) and not from the identity of participants in this profile.
More specifically, drivers who decide to compete for the public parking space undergo the risk of not being among the $R$ winner-drivers to get a public spot. In this case, they have to eventually resort to private parking space, only after wasting extra time and fuel (plus patience supply) on the failed attempt. The expected cost of the action $public$, $w_{pub}:A_{1}\times ...\times A_{N}\rightarrow\mathbb{R}$, is therefore a function of the number of drivers $k$ taking it, and is given by

\vspace{-5pt}
\small
\begin{equation}
w_{pub}(k) = min(1,R/k)c_{pub,s}+(1-min(1,R/k))c_{pub,f}
\end{equation}
\normalsize
where $c_{pub,s}$ is the cost of successfully competing for public parking space, whereas $c_{pub,f} = \gamma\cdot c_{pub,s}, \gamma>1$, is the cost of competing, failing, and eventually paying for private parking space.

On the other hand, the cost of private parking space is fixed

\vspace{-7pt}
\small
\begin{equation}
w_{priv}(k) = c_{priv} = \beta\cdot c_{pub,s}
\end{equation}
\normalsize

where $1 < \beta < \gamma$, so that the excess cost $\delta\cdot c_{pub,s}$, with $\delta=\gamma-\beta>0$,
%As far as the pricing scheme is concerned, $cost_{pub}$ and $cost_{priv}=a \cdot cost_{pub}$, $a>1$ are the fixed costs for %public and private parking space, respectively. If a player chooses to drive towards the limited public space, he risks to pay %an amount that exceeds both costs. In particular, everyone who attempts unsuccessfully to reserve public parking space, is %enforced to use private parking space at an augmented cost, that is, $cost_{fail}=b \cdot cost_{pub}$, $b>a$.
%Particularly, the excess price $(b-a)\cdot cost_{pub}$
reflects the actual cost of cruising and the ``virtual'' cost of wasted time till eventually heading to the private parking space.
Figure \ref{fig:cost_fun} plots the cost functions against the number of drivers, for both parking options, under different charging schemes.
%%%%mobicom
%\begin{figure}[btp]
%\vspace{-15pt}
%\begin{center}
%\begin{tabular}{cc}
%\hspace{-15pt}
%  \includegraphics[scale=0.3]{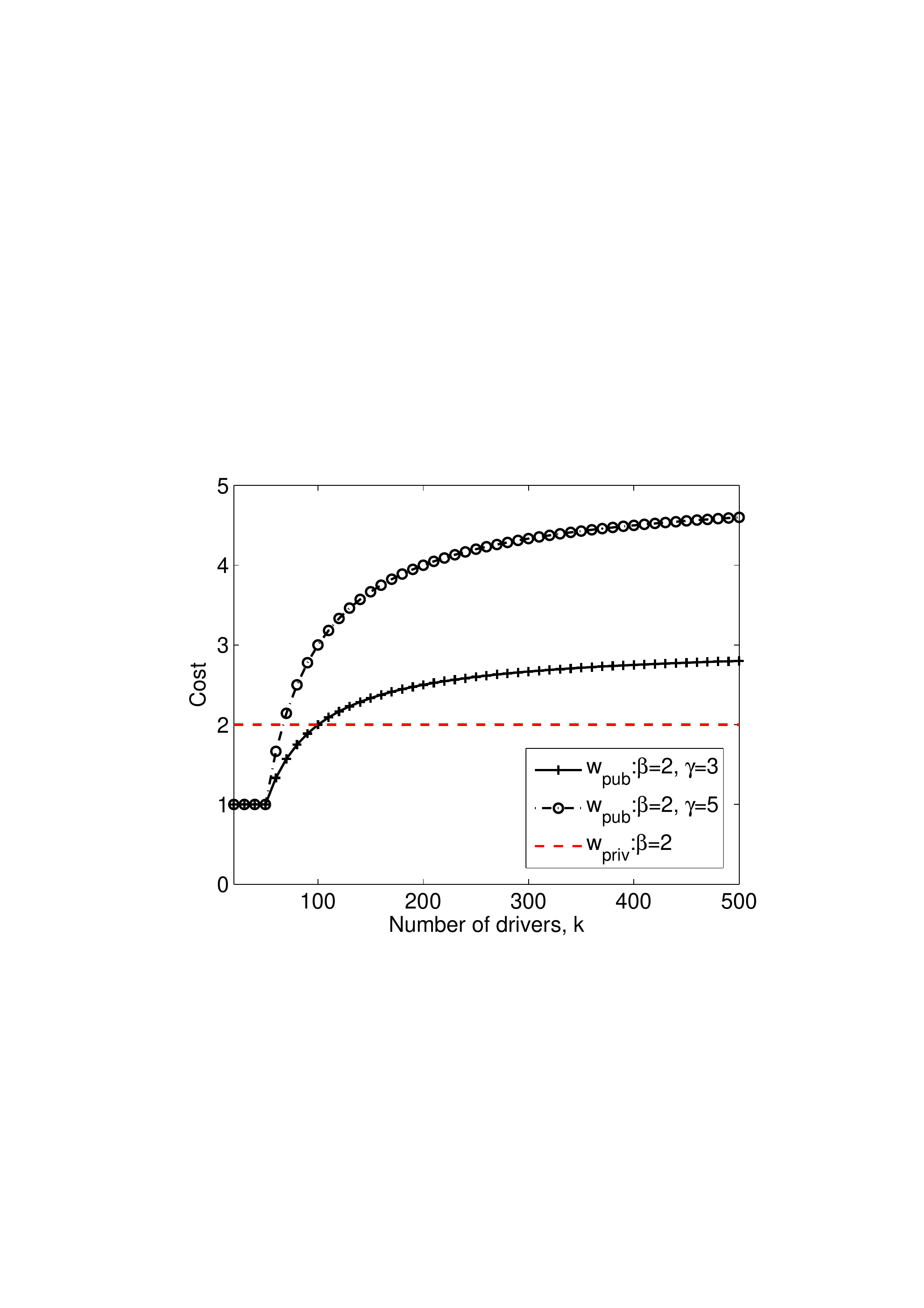}
%  \hspace{-10pt}
%  \includegraphics[scale=0.3]{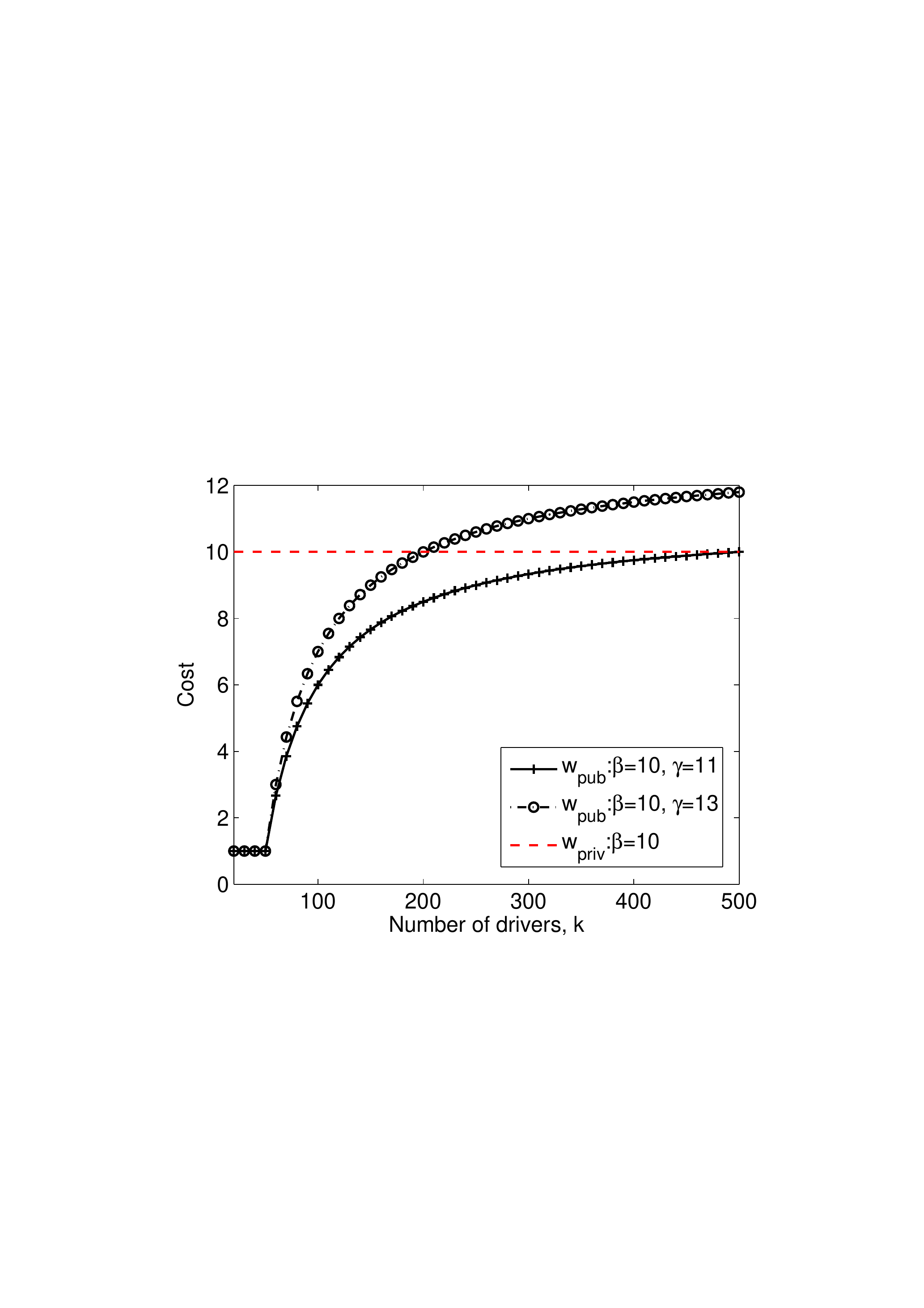}
%\end{tabular}
%\end{center}
%\vspace{-15pt} \caption{The cost functions for public and private parking space: $R=50$, $c_{pub,s}=1$.\label{fig:cost_fun}}
%\vspace{-10pt}
%\end{figure}

\begin{figure}[btp]
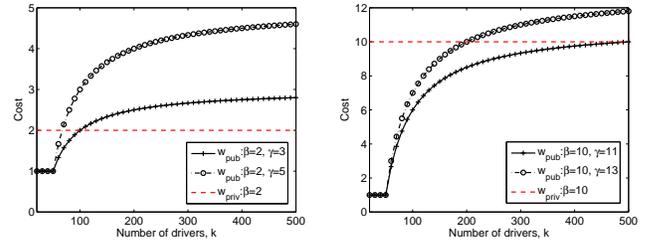

\vspace{-100pt}
\begin{center}
\begin{tabular}{cc}
\hspace{-40pt}
  \includegraphics[scale=0.3]{costs}
  \hspace{-60pt}
  \includegraphics[scale=0.3]{costs2}
\end{tabular}
\end{center}
\vspace{-85pt} \caption{The cost functions for public and private parking space: $R=50$, $c_{pub,s}=1$.\label{fig:cost_fun}}
\vspace{-10pt}
\end{figure}

We denote every action profile with the vector $a=(a_{i},a_{-i})\in\times_{k=1}^{N}A_{k}$, where $a_{-i}$ denotes the actions of all other drivers but player $i$ in the profile $a$. Besides the two \emph{pure} strategies coinciding with the pursuit of public and private parking space, the drivers may also randomize over them.
%More precisely, the expected cost for every player $i\in\mathcal{N}$ who chooses to risk for public parking space depends on the number of risk-takers, that is, $c_{i}^{N}(x)=w_{pub}(\sigma_{pub}(x),cost_{pub},cost_{fail})$, where $x=(public,a_{-i})$ and $\sigma_{pub}(x)$ the total number of drivers who choose the public parking space in this profile. On the contrary, those who renege for private parking space pay the fixed cost $c_{i}^{N}(x)=w_{priv}$, with $x=(private,a_{-i})$ (\ie the cost of private parking $cost_{priv}$), irrespective of their number. Therefore, all players who share the same preference for the parking service are called to pay the same amount.
%The players may also perform according to a specific probability distribution applied in their action set.
In particular, if $\Delta(A_{i})$ is the set of probability distributions over the action set of player $i$, a player's \emph{mixed action} corresponds to a vector $p=(p_{pub},p_{priv})\in \Delta(A_{i})$, where $p_{pub}$ and $p_{priv}$ are the probabilities of the pure actions, with $p_{pub}+p_{priv}=1$, while its cost is a weighted sum of the %are the probabilities with which the driver decides in favor of public and private parking space, respectively. Every mixed action is paired with an expected cost over
cost functions $w_{pub}()$ and $w_{priv}()$ of the pure actions.

In this game-theoretic formulation, the drivers are assumed to be rational strategic players. They explicitly consider the presence of identical counter-actors that also make rational decisions, weight the costs related to every possible action profile, and act as cost-minimizers. In doing so, they usually do not avail precise information about the actual demand, \ie competition, for parking resources.
%This decisional operation should take into account the additional cost that they will have to pay if they fail to reserve an
%available public spot.
%Particularly, the players draw on every information reflects the overall parking demand, provided that they are aware of the parking supply. However, the innate dynamic nature of the game does not allow for precise information and hence fine measurements. Only probabilistic information may feed their estimations.
In the following sections, we analyze the parking selection game under different levels of uncertainty for the overall parking demand, ranging from the highly optimistic scenario of complete knowledge to one of high uncertainty about it. In all cases, we look into both the stable and optimal operational conditions and the respective costs incurred by the players.
\section{Complete knowledge of parking demand}\label{sec:cinfo}

Ideally, the players determine their strategy under complete knowledge of those parameters that shape their cost. Given the symmetry of the game, the additional piece of information that is considered available to the players, besides the number of vacant parking spots and the employed pricing policy, is the level of parking demand, \ie the number of drivers searching for parking space. We draw on concepts from \cite{Koutsoupias09} and theoretical results from \cite{Ashlagi06,Cheng04} to derive the equilibrium strategies for the game $\Gamma(N)$ and assess their (in)efficiency.
%In our case, the only critical factor is the number of other players, $N-1$. This assumption formulates the strategic variant %of the game, where each player infers his opponents' cost, given the total action profile. Thus, without any uncertainty %constraints, the investigation of this abstraction of the parking spot selection game takes advantage of classic normative %paradigms. Following, we apply known theoretical results to derive the equilibrium strategies for the competing players, in %light of the Definition \ref{def:parking_game}.

\subsection{Pure Equilibria strategies}\label{sec:pe}

%\hspace{12pt}
\textbf{Existence:} The parking spot selection game constitutes a symmetric game, where the action set is common to all players and consists of two possible actions, $public$ and $private$. Cheng \etal~ have shown in (\cite{Cheng04}, Theorem 1) that every symmetric game with two strategies has an equilibrium in pure strategies.
%\newdef{theorem}{Theorem}
%\begin{theorem}\label{thm:symmetric_two_strategies}
%A symmetric game with two strategies has an equilibrium in pure strategies.
%\end{theorem}
%
%\begin{definition}\label{def:tragedy_of_commons}
%The \emph{Tragedy of Commons} refers to a dilemmatic situation where a number of rational independent and self-interested agents deplete a shared limited resource. The problem formulation is stated as follows:
%\begin{itemize}
%  \item $\mathcal{N}=\{1,...,N\}, N>1$ is the set of players,
%  \item $A_{i}\in\{use,not~use\}$, the set of possible actions,
%  \item $w:\mathcal{N}\rightarrow\mathbb{R}$, the non-increasing cost function for every user of the resource,
%  \item $cost$, the fixed cost for everyone who refuses to use the resource.
%\end{itemize}
%\end{definition}
%

\textbf{Computation:} Thanks to the game's symmetry, the full set of $2^N$ different action profiles maps into
$N+1$ different action meta-profiles. Each meta-profile $a(m), m \in [0,N]$ encompasses all ${N\choose m}$ different action profiles that result in the same number of drivers competing for on-street parking space. The expected costs for these $m$ drivers and for the $N-m$ ones choosing directly the private parking lot alternative are functions of $a(m)$ rather than the exact action profile.

In general, the cost $c_i^N(a_i,a_{-i})$ for the driver $i$ under the action profile $a=(a_i,a_{-i})$ is

%\small
%\vspace{-5pt}
%%\begin{figure}[h!]
%\begin{eqnarray}
%%\[
%c_{i}^{N}(a_{i},a_{-i})=\left\{
%\begin{array}{l l}
%w_{pub}(\sigma_{pub}(a_{i},a_{-i}))=
%\;P_{pub} \cdot cost_{pub} + (1-P_{pub}) \cdot cost_{fail},\\
%\; \; \; \; \quad\text{for~ $a_{i}=public$}\\
%w_{priv}=cost_{priv}, \quad\text{for~ $a_{i}=private$}\\
%\end{array} \right.
%%\]
%\label{equ:cost}
%\vspace{-5pt}
%\end{eqnarray}
%
\vspace{-10pt}
\small
%\begin{figure}[h!]
\begin{eqnarray}
%\[
c_{i}^{N}(a_{i},a_{-i})=\left\{
\begin{array}{l l}
w_{pub}(\sigma_{pub}(a)),~~~~~~~~~~\text{for~ $a_{i}=public$} \\
w_{priv}(N-\sigma_{pub}(a)),~~~ \text{for~ $a_{i}=private$} \\
\end{array} \right.
%\]
\label{equ:cost}
\label{eqn:costN}
\end{eqnarray}
\normalsize

where $\sigma_{pub}(a)$ is the number of competing drivers for on-street parking under action profile $a$.
%The core of the decision mechanism in the strategic games lies in the cost function that is paired with every action option. Following the Definition \ref{def:parking_game}, in the parking spot selection game the choice to renege from competition is assessed at a fixed cost (\ie $w_{priv}=cost_{priv}$) while the risk-takers pay according to the cost function $w_{pub}$, which equals the expected cost for public space reservation, given the number of competitors $\sigma_{pub}(a_{i},a_{-i})=\sum_{k=1,a_{k}=public}^{N} 1$ or the probability to park in public space $P_{pub}=\min(\frac{R}{\sigma_{pub}(a_{i},a_{-i})},1)$.
%\end{figure}
%\normalsize
Equilibria action profiles combine the players' \emph{best-responses} to their opponents' actions. Formally, the action profile $a=(a_{i},a_{-i})$ is a pure Nash equilibrium if for all $i\in \mathcal{N}$:

\vspace{-10pt}
%\small
\begin{equation}\label{equ:pe}
a_{i}\in \arg \min_{a'_{i}\in A_{i}}(c_{i}^{N}(a'_{i},a_{-i}))
\end{equation}
\normalsize
%
%The Definition \ref{equ:pe} describes formally which outcome of the interaction between several strategic decision makers constitutes a Nash equilibrium.
%\begin{definition}\label{equ:pe}
%The strategy profile $x=(a_{i},a_{-i})$ is a pure Nash equilibrium if for all $i\in \mathcal{N}$
%\begin{equation}
%a_{i}\in \arg \min_{a'_{i}\in A_{i}}(c_{i}^{N}(a'_{i},a_{-i}))
%\end{equation}
%\end{definition}
%Thus, if the interaction between several strategic decision makers constitutes a Nash equilibrium,
so that no player has anything to gain by changing her decision unilaterally.

Therefore, to derive the equilibria states, we locate the conditions on $\sigma_{pub}$ that break the equilibrium definition and reverse them. More specifically, given an action profile $a$ with $\sigma_{pub}(a)$ competing drivers, a player gains by changing her decision to play action $a_{i}$ in two circumstances:

\vspace{-10pt}
%\small
\begin{equation}\label{eq:cond1}
\hspace{6.7pt}
 ~when ~a_{i}=private \text{ and } w_{pub}(\sigma_{pub}(a)+1)<c_{priv}
\end{equation}
\vspace{-20pt}
\begin{equation}\label{eq:cond2}
\hspace{-30pt}
~when ~a_{i}=public \text{ and } w_{pub}(\sigma_{pub}(a))> c_{priv}
\end{equation}
\normalsize

%\begin{eqnarray}
%\text{when } a_{i}=private \text{ and } w_{pub}(\sigma_{pub}(a)+1)<c_{priv} \nonumber \label{eq:cond1} \\
%\text{when } a_{i}=public \text{ and } w_{pub}(\sigma_{pub}(a))> c_{priv} \label{eq:cond2}
%\end{eqnarray}
%
%In the first case, a driver profits from changing her decision to pay for private parking since the expected cost exceeds the expected cost when competing for %on-curb parking. The second case occurs when a risk-taker reneges for the safer option.
Taking into account the relation between the number of drivers and the available on-street parking spots, $R$, we can postulate the following Lemma:

%\newdef{lemma}{Lemma}
\begin{lemma}
In the parking spot selection game $\Gamma(N)$, a driver is motivated to change his action $a_{i}$ in the following circumstances:
%\small
\begin{equation}\label{eq_pe:1}
\hspace{-50pt}
\bullet ~a_{i}=private \text{ and } \sigma_{pub}(a)<R\leq N~\text{or }
\end{equation}
\vspace{-15pt}
\begin{equation}\label{eq_pe:2}
\hspace{68pt}
R\leq\sigma_{pub}(a)<\sigma_{0}-1\leq N~\text{or }
\end{equation}
\begin{equation}\label{eq_pe:3}
%\hspace{-51pt}
\hspace{18pt}
\sigma_{pub}(a)<N\leq R
%\sigma_{pub}<N \text{, if } N\leq R \text{,}
\end{equation}
%
%\begin{equation}\label{eq_pe:4}
%\hspace{5.1pt}
%\bullet ~a_{i}=public \text{ and } \sigma_{pub}>\sigma_{0} \text{, if } R<\sigma_{pub}\leq N \text{,}
%\end{equation}
\begin{equation}\label{eq_pe:4}
\hspace{-46pt}
\bullet ~a_{i}=public \text{ and } R<\sigma_{0}<\sigma_{pub}(a)\leq N
\end{equation}

where $\sigma_{0}=\frac{R(\gamma-1)}{\delta}\in\mathbb{R}$.
\label{le:pe}
\end{lemma}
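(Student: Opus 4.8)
The plan is to start from the two break-even conditions \eqref{eq:cond1} and \eqref{eq:cond2}, substitute the closed form of $w_{pub}$, and solve the resulting inequalities for $\sigma_{pub}(a)$. The only subtlety in $w_{pub}$ is the $\min(1,R/k)$ kink: for $k\le R$ one has $w_{pub}(k)=c_{pub,s}$, while for $k>R$ it becomes the strictly increasing expression $\gamma c_{pub,s}-\frac{R}{k}(\gamma-1)c_{pub,s}$. Hence every inequality in $w_{pub}$ splits into a flat branch and an increasing branch at the threshold $k=R$, and the whole argument reduces to locating, on each branch, the values of $\sigma_{pub}(a)$ that satisfy the deviation condition and then intersecting with the admissible range of $\sigma_{pub}(a)$.

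First I would treat the private-to-public deviation \eqref{eq:cond1}, namely $w_{pub}(\sigma_{pub}(a)+1)<c_{priv}=\beta c_{pub,s}$. On the flat branch ($\sigma_{pub}(a)+1\le R$) the left side equals $c_{pub,s}$, which is below $\beta c_{pub,s}$ because $\beta>1$; thus the condition holds for every such profile, giving the sub-case $\sigma_{pub}(a)<R$. On the increasing branch ($\sigma_{pub}(a)+1>R$), cancelling $c_{pub,s}$ and rearranging $\gamma-\frac{R}{\sigma_{pub}(a)+1}(\gamma-1)<\beta$ yields $\sigma_{pub}(a)+1<R(\gamma-1)/\delta=\sigma_0$, i.e. $\sigma_{pub}(a)<\sigma_0-1$. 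I would record here the elementary but essential fact that $\sigma_0>R$ always, since $\gamma-1>\gamma-\beta=\delta$; this guarantees the increasing branch contributes the genuine interval $R\le\sigma_{pub}(a)<\sigma_0-1$ rather than an empty set. It then remains to impose that a private player actually exists, i.e. $\sigma_{pub}(a)\le N-1$. Organising the two branches against the ordering of $R$ and $N$ produces exactly the three listed disjuncts: $\sigma_{pub}(a)<R\le N$ and $R\le\sigma_{pub}(a)<\sigma_0-1\le N$ when public spots are scarce, and the single merged interval $\sigma_{pub}(a)<N\le R$ when they are abundant (there the whole feasible range stays on the flat branch).

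Next I would treat the public-to-private deviation \eqref{eq:cond2}, namely $w_{pub}(\sigma_{pub}(a))>c_{priv}$. On the flat branch this reads $c_{pub,s}>\beta c_{pub,s}$, impossible since $\beta>1$, so no deviation occurs there. On the increasing branch the inequality $\gamma-\frac{R}{\sigma_{pub}(a)}(\gamma-1)>\beta$ rearranges to $\sigma_{pub}(a)>\sigma_0$. Because $\sigma_0>R$, this automatically sits inside the increasing branch, and intersecting with the feasibility bound $\sigma_{pub}(a)\le N$ (a public player must be present) delivers precisely $R<\sigma_0<\sigma_{pub}(a)\le N$, the fourth disjunct.

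The bookkeeping, not the algebra, is where I expect the difficulty to lie. The cancellation and rearrangement on each branch are routine; the care is needed in (i) handling the $\min$ kink so the branch boundaries stay consistent, anchored by the inequality $\sigma_0>R$, and (ii) tracking the admissible range of $\sigma_{pub}(a)$, which differs for the two deviation types ($\sigma_{pub}(a)\le N-1$ when the deviator is currently private versus $\sigma_{pub}(a)\le N$ when currently public), so that the three private-case disjuncts come out exhaustive with the correct $N$-dependent endpoints.
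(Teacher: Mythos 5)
Your proposal is correct and follows essentially the same route as the paper's proof: both start from the deviation conditions \eqref{eq:cond1} and \eqref{eq:cond2}, substitute the closed form of $w_{pub}$, and solve the resulting inequalities for $\sigma_{pub}(a)$, with your flat-branch argument being exactly the paper's ``trivial'' price-differential case for \eqref{eq_pe:1} and \eqref{eq_pe:3}. Your explicit verification that $\sigma_{0}>R$ (from $\gamma-1>\delta$) and your tracking of the feasibility bounds $\sigma_{pub}(a)\leq N-1$ versus $\sigma_{pub}(a)\leq N$ are details the paper leaves implicit, and they are handled correctly.
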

\normalsize
\begin{proof}
Conditions (\ref{eq_pe:1}) and (\ref{eq_pe:3}) are trivial. Since the current number of competing vehicles is less than the on-street parking capacity, every driver having originally chosen the private parking option
%in the fear of the extra penalty cost (\ie in case of belated arrival at the public parking spots),
has the incentive to change her decision due to the price differential between $c_{pub,s}$ and $c_{priv}$.

When $\sigma_{pub}(a)$ exceeds the public parking supply, as in (\ref{eq_pe:2}), a driver who has decided to avoid competition, profits from switching her action when the expected cost of playing $public$ becomes less than the fixed cost of playing $private$. From (\ref{eqn:costN}) and (\ref{eq:cond1}), it must hold that:
%\begin{equation}
%w_{pub}(\sigma_{pub}+1)<w_{priv} \nonumber
%\end{equation}

\vspace{-8pt}
\small
\begin{eqnarray}
\frac{R}{\sigma_{pub}(a)+1}\cdot c_{pub,s}+(1-\frac{R}{\sigma_{pub}(a)+1})\cdot c_{pub,f} < c_{priv} \Rightarrow  \nonumber \\
\sigma_{pub}(a) < \frac{R(\gamma-1)}{\delta}-1 \nonumber
\end{eqnarray}
\normalsize
%that is,
%\begin{equation}
%\sigma_{pub}<\frac{R(\gamma-1)}{\delta}-1 \nonumber
%\end{equation}

which yields (\ref{eq_pe:2}).

On the contrary, a driver that first decides to compete for public parking space, switches to $private$ if the competing drivers outnumber the public parking resources. Namely, from (\ref{equ:cost}) and (\ref{eq:cond2}), when

\vspace{-8pt}
\small
\begin{eqnarray}
%w_{pub}(\sigma_{pub}(a))>w_{priv}\longrightarrow \nonumber \\
\frac{R}{\sigma_{pub}(a)}\cdot c_{pub,s}+(1-\frac{R}{\sigma_{pub}(a)})\cdot c_{pub,f}> c_{priv} \Rightarrow \nonumber \\
\sigma_{pub}(a) > \frac{R(\gamma-1)}{\delta} \nonumber
\end{eqnarray}
\normalsize

inline with (\ref{eq_pe:4}).
\end{proof}

It is now possible to state the following Theorem for the pure Nash equilibria of the parking spot selection game. %Hereafter, we assume that $\sigma_{0}=\frac{R(\gamma-1)}{\delta}>R$.
%The Lemma \ref{le:pe} provides insights to the conditions - expressed via the number of risk-takers - that violate the %equilibrium principle. Therefore, the opposite conditions constitute Nash equilibria.

%\newdef{theorem}{Theorem}
\begin{theorem}\label{thm:pe}
A parking spot selection game has:
\begin{itemize}
\vspace{-3pt}
  \item one Nash equilibrium $a^*$ with $\sigma_{pub}(a^*)=\sigma_{pub}^{NE,1}=N$, if $N\leq\sigma_{0}$ and $\sigma_{0}\in\mathbb{R}$
  \item ${N\choose \lfloor\sigma_{0}\rfloor}$ Nash equilibrium profiles $a'$ with $\sigma_{pub}(a')=\sigma_{pub}^{NE,2}=\lfloor\sigma_{0}\rfloor$, if $N>\sigma_{0}$ and $\sigma_{0}\in (R,N)\backslash\mathbb{N^*}$
  \item ${N\choose \sigma_{0}}$ Nash equilibrium profiles $a'$ with $\sigma_{pub}(a')=\sigma_{pub}^{NE,2}=\sigma_{0}$ and ${N\choose \sigma_{0}-1}$ Nash equilibrium profiles $a^\star$ with $\sigma_{pub}(a^\star)=\sigma_{pub}^{NE,3}=\sigma_{0}-1$, if $N>\sigma_{0}$ and $\sigma_{0}\in [R+1,N]\cap\mathbb{N^*}$.
     % \vspace{-5pt}
\end{itemize}
%where $\sigma_{0}=\frac{R(b-1)}{b-a}$.
\end{theorem}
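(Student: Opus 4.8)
The plan is to exploit the game's symmetry to reduce the analysis to the single occupancy number $\sigma:=\sigma_{pub}(a)$, and then to characterise the pure equilibria by negating the deviation conditions of Lemma~\ref{le:pe}. Since the costs in (\ref{eqn:costN}) depend only on $\sigma$ and not on the identities of the competing drivers, every meta-profile at a given occupancy $\sigma$ consists of ${N\choose \sigma}$ genuine action profiles that are all simultaneously equilibria or all not; hence it suffices to decide which integer values $\sigma\in[0,N]$ are sustainable and then multiply by the matching binomial coefficient.

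First I would turn the ``motivation to deviate'' statements of Lemma~\ref{le:pe} into their logical negations to obtain the stability conditions at occupancy $\sigma$. A driver playing $private$ (which presupposes $\sigma<N$) has no incentive to switch exactly when $w_{pub}(\sigma+1)\geq c_{priv}$, which the Lemma's derivation reduces to $\sigma+1\geq\sigma_0$, i.e. $\sigma\geq\sigma_0-1$; a driver playing $public$ (which presupposes $\sigma>0$) has no incentive to switch exactly when $w_{pub}(\sigma)\leq c_{priv}$, i.e. $\sigma\leq\sigma_0$. Therefore a meta-profile is a pure Nash equilibrium iff both (i) $\sigma=N$ or $\sigma\geq\sigma_0-1$, and (ii) $\sigma=0$ or $\sigma\leq\sigma_0$ hold. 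Using $\sigma_0=R(\gamma-1)/\delta>R\geq1$ I would first discard $\sigma=0$ (a lone switcher to $public$ always profits, as $w_{pub}(1)=c_{pub,s}<c_{priv}$), so that every interior equilibrium must obey the two-sided bound $\sigma_0-1\leq\sigma\leq\sigma_0$.

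The counting then splits according to how $N$ compares with $\sigma_0$ and whether $\sigma_0$ is integral. If $N\leq\sigma_0$, condition (ii) holds for $\sigma=N$ while (i) is vacuous, so full competition is an equilibrium; moreover an interior $\sigma<N$ would need $\sigma\geq\sigma_0-1\geq N-1$, forcing $\sigma_0=N$, so for $N<\sigma_0$ the profile with $\sigma_{pub}^{NE,1}=N$ is the unique equilibrium (the boundary $\sigma_0=N$ is resolved below). If $N>\sigma_0$, then $\sigma=N$ violates (ii), so all equilibria are interior and coincide with the integers in $[\sigma_0-1,\sigma_0]$, an interval of length one lying inside $(0,N)$. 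When $\sigma_0\notin\mathbb{N^*}$ this interval contains the single integer $\lfloor\sigma_0\rfloor$, giving ${N\choose \lfloor\sigma_0\rfloor}$ profiles at $\sigma_{pub}^{NE,2}=\lfloor\sigma_0\rfloor$; when $\sigma_0\in\mathbb{N^*}$ it contains the two integers $\sigma_0$ and $\sigma_0-1$, giving ${N\choose \sigma_0}$ profiles at $\sigma_{pub}^{NE,2}=\sigma_0$ and ${N\choose \sigma_0-1}$ profiles at $\sigma_{pub}^{NE,3}=\sigma_0-1$.

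The step I expect to be the main obstacle is the translation in the second paragraph: getting the direction of every inequality right at the real/integer interface, since $\sigma_0$ is real-valued while $\sigma$ is an integer occupancy. In particular the indifference cases $w_{pub}(\sigma)=c_{priv}$ (exactly $\sigma=\sigma_0$) and $w_{pub}(\sigma+1)=c_{priv}$ must be settled by a tie-breaking convention (an indifferent player does not deviate), and it is precisely this convention that yields the clean bound $\sigma_0-1\leq\sigma\leq\sigma_0$ and that governs the boundary $\sigma_0=N$ separating the first case from the third. Once this bound is in place, the case split and the passage from occupancy levels to the stated binomial counts are routine.
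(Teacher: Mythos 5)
Your argument is correct and is essentially the paper's own proof written out in full: the paper likewise obtains the equilibria by reversing the deviation conditions of Lemma~\ref{le:pe} and invoking the game's symmetry (its stated proof is a two-line remark that the theorem follows directly from (\ref{equ:pe}) and the lemma, with the binomial counting of profiles per occupancy level left implicit). Your extra care at the boundary $\sigma_{0}=N$ --- where, under the standard convention that an indifferent player does not deviate, the ${N \choose N-1}$ profiles with $\sigma_{pub}=N-1$ are also equilibria --- is in fact more precise than the theorem's first bullet, which glosses over this case.
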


\begin{proof}
Theorem \ref{thm:pe} follows directly from (\ref{equ:pe}) and Lemma \ref{le:pe}. The game has two equilibrium conditions on $\sigma_{pub}$ for $N>\sigma_{0}$ with integer $\sigma_{0}$, or a unique equilibrium condition, otherwise.
%Theorem \ref{thm:pe} follows directly from (\ref{equ:pe}) and Lemma \ref{le:pe}.
%In particular, reversing conditions of (\ref{eq_pe:1})-(\ref{eq_pe:4}), we get the conditions corresponding to the equilibria states for $\sigma_{pub}(a)$. Furthermore, every permutation $\pi$ of an equilibrium profile $a$ with $\sigma_{pub}(\pi(a))=\sigma_{pub}(a)$, \ie member of the action meta-profile $a(\lfloor\sigma_{0}\rfloor)$, constitutes a Nash equilibrium as well, due to the game's symmetry.
\end{proof}

In the Appendix, we provide an alternative way to derive the equilibria of $\Gamma(N)$ via potential functions.

\textbf{Efficiency:} The efficiency of the equilibria action profiles resulting from the strategically selfish decisions of the drivers is assessed through the broadly used metric of the Price of Anarchy \cite{Koutsoupias09}.
%and Price of Stability \cite{Roughgarden02}.
It expresses the ratio of the social cost in the worst-case equilibria over the optimal social cost under ideal coordination of the driver's strategies.
%The decentralized model proposes a scalable mechanism for parking space assignments that counts against the efforts for %drivers' coordination. In fact, these efforts aspire to lower the competition level for the limited parking capacity and hence %the ultimate total cost paid by the parking space consumers. The direct comparison between the decentralized and the - %centralized - coordination mechanism is expressed in the in-famous \emph{Price of Anarchy} ratio of the highest cost paid in %equilibrium state over the minimum total cost. Hereafter, we devise the term \emph{social cost} to refer to the total amount %paid.

\begin{proposition}\label{prop:poa}
In the parking spot selection game, the \emph{pure Price of Anarchy} equals:
%\small
\begin{eqnarray}
\verb"PoA"=\left\{
\begin{array}{l l}
  \frac{\gamma N-(\gamma-1)\min(N,R)}{\min(N,R)+\beta\max(0,N-R)}, ~~~~if~ \sigma_{0} \geq N \nonumber \\ \\ \frac{\lfloor\sigma_{0}\rfloor\delta-R(\gamma-1)+\beta N}{R+\beta(N-R)}, ~~~~~~~~~~if~ \sigma_{0}<N \nonumber \\
  \end{array} \right.
\end{eqnarray}
\normalsize
%\hspace{5pt}
%where $\sigma_{0}=\frac{R(b-1)}{b-a}\in\mathbb{R}$.
\end{proposition}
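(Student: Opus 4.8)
The plan is to compute the optimal (centrally coordinated) social cost and the social cost of the worst-case Nash equilibrium separately, and then form their ratio; since the whole cost scale factors out, I set $c_{pub,s}=1$ throughout so it cancels at the end. The first step is to express the social cost of any profile $a$ purely through $m=\sigma_{pub}(a)$, the number of drivers competing for public space. The $m$ public competitors each pay $w_{pub}(m)$ and the remaining $N-m$ drivers each pay $w_{priv}=\beta$, so
\[
SC(m)=m\,w_{pub}(m)+(N-m)\beta .
\]
Substituting the definition of $w_{pub}$ splits this into two linear regimes: for $0\le m\le R$ one gets $SC(m)=m(1-\beta)+N\beta$, which is \emph{decreasing} in $m$ because $\beta>1$; for $R\le m\le N$ one gets $SC(m)=m\delta+N\beta-R(\gamma-1)$, which is \emph{increasing} in $m$ because $\delta>0$. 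I would verify that the two expressions agree at $m=R$, so $SC$ is a continuous, V-shaped function with its minimum at $m=R$.

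From this V-shape the optimum is immediate: the minimiser over $\{0,\dots,N\}$ is $m^{\ast}=\min(N,R)$, giving the optimal social cost $\min(N,R)+\beta\max(0,N-R)$. This is exactly the denominator of the first PoA branch, and (since $\sigma_{0}<N$ forces $R<\sigma_{0}<N$, hence $N>R$) it reduces to $R+\beta(N-R)$, the denominator of the second branch.

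Next I would read off the worst-case equilibrium from Theorem~\ref{thm:pe}. When $\sigma_{0}\ge N$ the unique equilibrium has $m=N$, so the equilibrium cost is $SC(N)=\gamma N-(\gamma-1)\min(N,R)$, using the appropriate regime according to whether $N\le R$ or $N>R$; dividing by the optimum yields the first branch. When $\sigma_{0}<N$ the equilibria sit at $m=\lfloor\sigma_{0}\rfloor$ (non-integer $\sigma_{0}$) or at $m\in\{\sigma_{0}-1,\sigma_{0}\}$ (integer $\sigma_{0}$). Here lies the one non-mechanical step: I must select the equilibrium of \emph{largest} cost. Because every such equilibrium has $m\ge R$ and $SC$ is increasing on $[R,N]$, the worst case is the largest admissible $m$, namely $m=\lfloor\sigma_{0}\rfloor$ in both situations (noting $\lfloor\sigma_{0}\rfloor=\sigma_{0}$ when $\sigma_{0}$ is integral). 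Plugging $m=\lfloor\sigma_{0}\rfloor$ into the increasing-regime formula gives equilibrium cost $\lfloor\sigma_{0}\rfloor\delta-R(\gamma-1)+\beta N$, and division by $R+\beta(N-R)$ produces the second branch.

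The main obstacle is not any single calculation but the bookkeeping around the worst-case selection: establishing monotonicity of $SC$ on $[R,N]$, checking that all equilibria produced by Theorem~\ref{thm:pe} indeed lie at or above $R$, and unifying the integer and non-integer $\sigma_{0}$ sub-cases into the single expression $\lfloor\sigma_{0}\rfloor$. The continuity check at the boundary $m=R$ is precisely what makes these unifications, and the collapse of the first branch to $\mathrm{PoA}=1$ when $N\le R$, come out cleanly.
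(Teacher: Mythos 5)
Your proof is correct and follows essentially the same route as the paper's: you derive the identical two-regime piecewise-linear social cost as a function of $\sigma_{pub}$, identify the optimal profile at $\sigma_{pub}=\min(N,R)$, and substitute the worst-case equilibrium values $N$ and $\lfloor\sigma_{0}\rfloor$ from Theorem~\ref{thm:pe} into the appropriate regime. The only difference is one of bookkeeping: you justify selecting $m=\sigma_{0}$ over $m=\sigma_{0}-1$ (integer case) via monotonicity of the cost on $[R,N]$, whereas the paper makes the same selection by direct cost comparison in Corollary~\ref{cor:int_poa}.
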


\begin{proof}
The social cost under action profile $a$ equals:
%\small
%\begin{eqnarray}
%%\[
%C(\sigma_{pub}(a))=\sum_{i=1}^{N} c_{i}^{N}(a)=\left\{
%\begin{array}{l l}
%c_{pub,s}(N\beta-\sigma_{pub}(a)(\beta-1)), ~\text{if } \sigma_{pub}(a)\leq R \\
%c_{pub,s}(\sigma_{pub}(a)\delta-R(\gamma-1)+\beta N), \text{if } R <\sigma_{pub}(a)\leq N
%\end{array} \right.
%\label{eq:social_cost}
%%\]
%\end{eqnarray}
%\normalsize
%
\vspace{-8pt}
\small
\begin{eqnarray}
C(\sigma_{pub}(a))=\sum_{i=1}^{N} c_{i}^{N}(a)= \nonumber \\
c_{pub,s}(N\beta-\sigma_{pub}(a)(\beta-1)), ~\text{if } \sigma_{pub}(a)\leq R \text{ and }\\
c_{pub,s}(\sigma_{pub}(a)\delta-R(\gamma-1)+\beta N), ~\text{if } R <\sigma_{pub}(a)\leq N \nonumber
\label{eq:social_cost}
\end{eqnarray}
\normalsize

The numerators of the two ratios are obtained directly by replacing the first two $\sigma_{pub}^{NE}$ values (worst-cases) computed in Theorem \ref{thm:pe}. %and (\ref{eq:social_cost}), the social cost of the unique pure equilibrium profile of each value interval of $N$ equals the %corresponding numerators of \verb"PoA" ratios, multiplied by $cost_{pub}$. It should be noticed that all the pure-action %meta-profiles of $\sigma_{pub,eq}$ for $N>\sigma_{0}$ are identical in the social cost value.
On the other hand, under the ideal action profile $a_{opt}$, exactly $R$ drivers pursue on-street parking, whereas the remaining $N-R$ are served by the private parking resources. Therefore, under $a_{opt}$, no drivers find themselves in the unfortunate situation to have to pay the additional cost of cruising in terms of time and fuel after having unsuccessfully competed for an on-street parking spot. The optimal social cost, $C_{opt}$ is given by:
%is concerned, we need to find $\sigma_{pub,opt}

\vspace{-10pt}
\small
%\begin{equation}
%%\sum_{i=1}^{N} c_{i}^{N}(x_{opt})= \sigma_{pub,opt}\cdot w_{pub}(\sigma_{pub,opt})+ (N-\sigma_{pub,opt})\cdot w_{priv} \nonumber
%\end{equation}
%\normalsize
\begin{equation}
C_{opt} = \sum_{i=1}^{N} c_{i}^{N}(a_{opt})= c_{pub,s}[min(N,R)+\beta\cdot max(0,N-R)]
\end{equation}
%where $\sigma_{pub,opt}=\sigma_{pub}(x_{opt})$. In particular, the social optimum reaches its minimum value, which equals
%\small
%\begin{equation}
%cost_{pub}(\min(N,R)+a\max(0,(N-R))) \text{,~at~} \sigma_{pub,opt}=\min(N,R) \nonumber
%\end{equation}
\normalsize
\end{proof}

\begin{corollary}\label{cor:int_poa}
In the parking spot selection game, the \emph{pure Price of Anarchy} equals $\frac{1}{1-\frac{(\beta-1)R}{\beta N}}$, if $N>\sigma_{0}$ and $\sigma_{0}\in [R+1,N]\cap\mathbb{N^*}$.
\end{corollary}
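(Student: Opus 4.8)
The plan is to obtain this result as a direct specialization of Proposition~\ref{prop:poa}. Since the hypothesis places us in the regime $N>\sigma_{0}$ with $\sigma_{0}$ an integer, I would invoke the second branch of the PoA formula, the one valid for $\sigma_{0}<N$. The only preliminary point requiring care is identifying the worst-case equilibrium: by Theorem~\ref{thm:pe}, when $\sigma_{0}\in[R+1,N]\cap\mathbb{N^*}$ there are two families of pure equilibria, one with $\sigma_{pub}=\sigma_{0}$ and one with $\sigma_{pub}=\sigma_{0}-1$. Because the social cost $C(\sigma_{pub})=c_{pub,s}(\sigma_{pub}\delta-R(\gamma-1)+\beta N)$ on the range $R<\sigma_{pub}\leq N$ is strictly increasing in $\sigma_{pub}$ (as $\delta>0$), the worst-case equilibrium is the one with $\sigma_{pub}=\sigma_{0}$, which is exactly the value the Proposition uses once we note that $\lfloor\sigma_{0}\rfloor=\sigma_{0}$ for integer $\sigma_{0}$. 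I would also record that $\sigma_{0}\geq R+1$ forces $N\geq\sigma_{0}>R$, so $\max(0,N-R)=N-R$ and the optimal cost takes the form $C_{opt}=c_{pub,s}[R+\beta(N-R)]$.

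With the worst-case numerator fixed, the computation is a short substitution. First I would set $\lfloor\sigma_{0}\rfloor=\sigma_{0}$ in the second branch, giving a numerator of $\sigma_{0}\delta-R(\gamma-1)+\beta N$. Then, using the definition $\sigma_{0}=R(\gamma-1)/\delta$, so that $\sigma_{0}\delta=R(\gamma-1)$, the first two terms cancel and the numerator collapses to $\beta N$. The denominator $R+\beta(N-R)$, which is precisely $C_{opt}/c_{pub,s}$, rewrites as $\beta N-(\beta-1)R$.

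Finally, dividing numerator and denominator by $\beta N$ turns $\frac{\beta N}{\beta N-(\beta-1)R}$ into $\frac{1}{1-(\beta-1)R/(\beta N)}$, the claimed expression. I do not anticipate a genuine obstacle here, since everything reduces to algebra once Proposition~\ref{prop:poa} and Theorem~\ref{thm:pe} are in hand; the one place worth being explicit is the monotonicity argument that pins the worst-case equilibrium at $\sigma_{0}$ rather than $\sigma_{0}-1$, since this is what justifies reading the numerator off the higher equilibrium value and makes the cancellation $\sigma_{0}\delta=R(\gamma-1)$ exact.
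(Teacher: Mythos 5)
Your proposal is correct and follows essentially the same route as the paper: both identify the two equilibrium families $\sigma_{pub}=\sigma_{0}$ and $\sigma_{pub}=\sigma_{0}-1$ from Theorem~\ref{thm:pe}, pin the worst case at $\sigma_{0}$ via $\delta>0$ (your monotonicity remark is just the paper's explicit comparison of $c_{pub,s}\,\beta N$ against $c_{pub,s}(\beta N-\delta)$ stated differently), and then reduce the ratio $\frac{\beta N}{R+\beta(N-R)}$ to the claimed form using $\sigma_{0}\delta=R(\gamma-1)$. No gaps; the extra care you take in noting $\lfloor\sigma_{0}\rfloor=\sigma_{0}$ and $\max(0,N-R)=N-R$ is sound and merely makes explicit what the paper leaves implicit.
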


\begin{proof}
From Theorem \ref{thm:pe}, for integer $\sigma_{0}$ and $N>\sigma_{0}$ there are two sets of equilibria profiles with $\sigma_{pub}^{NE,2}=\sigma_{0}$ and $\sigma_{pub}^{NE,3}=\sigma_{0}-1$. The social costs at these profiles are $c_{pub,s}\cdot N\beta$ and $c_{pub,s}\cdot(N\beta-\delta)$, respectively. Since $\beta > 1$ and $\delta > 0$, the highest social cost, which determines the \verb"PoA" ratio, is paid in the first case.
\end{proof}

\begin{proposition}\label{prop:bound_poa}
In the parking spot selection game, the \emph{pure Price of Anarchy} is upper-bounded by $\frac{1}{1-R/N}$ with $N>R$.
%$\frac{b}{a-1}$
\end{proposition}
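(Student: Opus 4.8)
The plan is to treat separately the two regimes of Proposition~\ref{prop:poa}, namely $\sigma_{0}\geq N$ and $\sigma_{0}<N$, in each case massaging the corresponding \verb"PoA" expression into the target bound $\frac{1}{1-R/N}=\frac{N}{N-R}$. Throughout I would invoke the hypothesis $N>R$ to simplify the first formula via $\min(N,R)=R$ and $\max(0,N-R)=N-R$, so that both denominators become $R+\beta(N-R)$, which is positive.

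The regime $\sigma_{0}<N$ is the easy one. Here I would first discard the floor term: since $\lfloor\sigma_{0}\rfloor\leq\sigma_{0}=R(\gamma-1)/\delta$, we have $\lfloor\sigma_{0}\rfloor\delta\leq R(\gamma-1)$, whence the numerator $\lfloor\sigma_{0}\rfloor\delta-R(\gamma-1)+\beta N$ is at most $\beta N$ and therefore $\verb"PoA"\leq\frac{\beta N}{R+\beta(N-R)}$. It then remains to verify $\frac{\beta N}{R+\beta(N-R)}\leq\frac{N}{N-R}$, which, after cross-multiplying by the two positive denominators and cancelling the common factor $N$, collapses to $\beta(N-R)\leq R+\beta(N-R)$, i.e.\ $0\leq R$, which is trivially true.

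The regime $\sigma_{0}\geq N$ is the crux. Here $\verb"PoA"=\frac{\gamma N-(\gamma-1)R}{R+\beta(N-R)}$, and I would cross-multiply the desired inequality $\verb"PoA"\leq\frac{N}{N-R}$ by the positive quantities $(N-R)$ and $R+\beta(N-R)$ and collect terms. Using $\delta=\gamma-\beta$ (so that $2\gamma-\beta=\gamma+\delta$), the $\beta$-terms and cross terms cancel neatly and the inequality reduces to $f(N)\leq 0$, where $f(N)=\delta N^{2}-(\gamma+\delta)NR+(\gamma-1)R^{2}$ is a quadratic in $N$ with positive leading coefficient $\delta>0$. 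The key computation I anticipate is evaluating $f$ at the two endpoints of the admissible range $R<N\leq\sigma_{0}$: direct substitution gives $f(R)=-R^{2}<0$ and, plugging $\sigma_{0}=R(\gamma-1)/\delta$, $f(\sigma_{0})=-\frac{(\gamma-1)R^{2}}{\delta}<0$, both strictly negative since $\gamma>1$ and $\delta>0$.

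Finally I would close the argument by convexity: an upward parabola is negative precisely on the open interval between its two real roots, so $f(R)<0$ and $f(\sigma_{0})<0$ place both $R$ and $\sigma_{0}$ strictly inside that root interval; since the interval is convex, the whole segment $[R,\sigma_{0}]$ lies in it and $f(N)<0$ throughout, in particular for every $N\in(R,\sigma_{0}]$. This establishes $\verb"PoA"\leq\frac{N}{N-R}=\frac{1}{1-R/N}$ in the second regime as well, completing the bound. I expect the only delicate part to be the bookkeeping in the quadratic reduction together with the two endpoint evaluations of $f$; the convexity step and the first regime are routine.
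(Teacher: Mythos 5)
Your proposal is correct, and I verified the computations: in the regime $\sigma_{0}\geq N$ your reduction $(\gamma N-(\gamma-1)R)(N-R)\leq N(R+\beta(N-R))$ does collapse, via $\delta=\gamma-\beta$, to $f(N)=\delta N^{2}-(\gamma+\delta)NR+(\gamma-1)R^{2}\leq 0$, and the endpoint values $f(R)=-R^{2}$ and $f(\sigma_{0})=-(\gamma-1)R^{2}/\delta$ are as you state, so convexity of the upward parabola indeed forces $f<0$ on all of $[R,\sigma_{0}]$ (note $R<\sigma_{0}$ is automatic here from $R<N\leq\sigma_{0}$, and the existence of two real roots is implied by $f$ taking negative values). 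However, your route through the crux regime differs from the paper's. The paper rewrites the numerator using $R(\gamma-1)=\delta\sigma_{0}$ to get $\mathrm{PoA}=\bigl(\gamma-\delta\sigma_{0}/N\bigr)\big/\bigl((R+\beta(N-R))/N\bigr)$, applies $N\leq\sigma_{0}$ once to bound the numerator by $\gamma-\delta=\beta$, and lands on the intermediate bound $\frac{\beta}{\beta-\frac{R}{N}(\beta-1)}=\frac{1}{1-\frac{R(\beta-1)}{N\beta}}$, which is strictly below $\frac{1}{1-R/N}$ because $(\beta-1)/\beta<1$; notably this tighter bound is the \emph{same} one the paper obtains in the regime $N>\sigma_{0}$, so its argument exposes a uniform, $\beta$-dependent bound showing the stated $\frac{1}{1-R/N}$ is not tight. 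Your quadratic-plus-convexity argument is more mechanical and avoids the clever substitution $\sigma_{0}=R(\gamma-1)/\delta$ inside the numerator, at the cost of bookkeeping and of targeting only the stated (looser) bound directly; in the regime $\sigma_{0}<N$ your argument (discard the floor via $\lfloor\sigma_{0}\rfloor\delta\leq R(\gamma-1)$, then bound $\beta N/(R+\beta(N-R))$ by $N/(N-R)$) is essentially identical to the paper's. Both proofs in fact yield the strict inequality.
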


\begin{proof}
From Proposition \ref{prop:poa}, when $N\leq\sigma_{0}$,

\vspace{-10pt}
\small
\begin{equation}
\verb"PoA"=\frac{\gamma N-R(\gamma-1)}{R+\beta(N-R)}
=\frac{\gamma-\delta\sigma_0/N}{\frac{R+\beta(N-R)}{N}}
\leq \frac{\beta}{\beta-\frac{R}{N}(\beta-1)}
%\frac{an}{an-r(a-1)}=
< \frac{1}{1-R/N}
%<\frac{1}{1-\frac{a-b}{1-b}}=\frac{b-1}{a-1}<\frac{b}{a-1}$.\\
%
\nonumber
\end{equation}
\normalsize
%If $N\leq R$, $\verb"PoA"=1<\frac{b}{a-1}$.\\
%
Similarly, when $N>\sigma_{0}$,

\vspace{-10pt}
\small
\begin{equation}
\verb"PoA"=\frac{\lfloor\sigma_{0}\rfloor\delta-R(\gamma-1)+\beta N}{R+\beta(N-R)}\leq
%\frac{\sigma_{0}(b-a)-R(b-1)+aN}{R+a(N-R)}=
%\frac{aN}{aN-R(a-1)}=
\frac{1}{1-R\frac{(\beta-1)}{N\beta}}
%<\frac{1}{1-\frac{R}{N}}\leq\frac{1}{1-\frac{a-b}{1-b}}=\frac{b-1}{a-1}$\\
<\frac{1}{1-R/N}
%<\frac{b}{a-1}$.
\nonumber
\end{equation}
\normalsize
\end{proof}

\subsection{Mixed-action equilibria strategies}\label{me}
We mainly draw our attention on \emph{symmetric} mixed-action equilibria since these can be more helpful in dictating practical strategies in real systems. Asymmetric mixed-action equilibria are discussed in the end of the Section.

\textbf{Existence:} Ashlagi, Monderer, and Tennenholtz proved in (\cite{Ashlagi06}, Theorem 1) that a unique symmetric mixed equilibrium exists for the broader family of resource selection games with more than two players and increasing cost functions. It is trivial to repeat their proof and confirm this result for our parking spot selection game $\Gamma(N)$, with $N>R$ and cost functions $w_{pub}(\cdot)$ and $w_{priv}(\cdot)$ that are non-decreasing functions of the number of players (increasing and constant, respectively).
%The parking spot selection game pertains to the broader category of finite symmetric games and thus follows theoretical results %originally introduced for general symmetric settings []. Moreover, the resource selection framework brings additional results %with respect to the equilibrium existence that qualify the generic conclusions, assisting the investigation of the particular %game [].

%\begin{theorem}\label{thm:ex_mixed}
%A finite symmetric game has a symmetric mixed-action equilibrium.
%\end{theorem}
\textbf{Computation:}
If we denote by
%\small
\begin{equation}
B(\sigma_{pub};N,p_{pub})={N\choose \sigma_{pub}}p_{pub}^{\sigma_{pub}}(1-p_{pub})^{N-\sigma_{pub}}
\end{equation}
\normalsize
the probability distribution of the number of drivers that decide to compete for on-street parking spots, where $p=(p_{pub},p_{priv})$ denotes a mixed-action, then

\vspace{-8pt}
\small
\begin{eqnarray}\label{eq:costs_sym}
    c_{i}^{N}(public,p) &=& \sum_{\sigma_{pub}=0}^{N-1}w_{pub}(\sigma_{pub}+1)B(\sigma_{pub};N-1,p_{pub}) \nonumber \\  c_{i}^{N}(private,p) &=& c_{priv} \nonumber
\end{eqnarray}
\normalsize

denote the expected costs of choosing the on-street (resp. private) parking space option when all other drivers play the mixed-action $p$, while

\vspace{-8pt}
\small
\begin{equation}\label{eq:cost_mix_sym}
 c_{i}^{N}(p,p)=p_{pub}\cdot c_{i}^{N}(public,p)+p_{priv}\cdot c_{i}^{N}(private,p)
\end{equation}
\normalsize

is the cost of the symmetric profile where everyone plays the mixed-action $p$.
With these at hand, we can now postulate the following Theorem.

\begin{theorem}\label{thm:sme}
The parking spot selection game $\Gamma(N)$ has a unique symmetric mixed-action Nash equilibrium $p^{NE}=(p^{NE}_{pub},p^{NE}_{priv})$, where:
\begin{itemize}
  \item $p^{NE}_{pub}=1$, \text{ if } $N\leq\sigma_{0}$ and
  \item $p^{NE}_{pub}=\frac{\sigma_{0}}{N}$, \text{ if } $N>\sigma_{0}$,
\end{itemize}

%\begin{itemize}
%  \item $p^{NE}_{pub}=1$, if $N\leq\sigma_{0}$,
%  \item $p^{NE}_{pub}=\frac{\sigma_{0}}{N}$, if $N>\sigma_{0}$,
  %with $p_{pub,sym,eq}=\frac{\lfloor\sigma_{0}\rfloor}{N}$
%\end{itemize}
%\hspace{5pt}
where $p^{NE}_{pub}=1-p^{NE}_{priv}$ and $\sigma_{0}\in\mathbb{R}$.%=\frac{R(\gamma-1)}{\delta}\in\mathbb{R}$.
\end{theorem}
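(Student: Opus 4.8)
The plan is to lean on the existence-and-uniqueness result quoted just before the statement, so that it suffices to exhibit one symmetric mixed action satisfying the equilibrium best-response conditions; uniqueness then certifies it as \emph{the} equilibrium. First I would set up the best-response characterization: a symmetric profile in which every player uses $p=(p_{pub},p_{priv})$ is an equilibrium iff, when the other $N-1$ players play $p$, action $public$ is a (weak) best response at $p_{pub}=1$ (a corner equilibrium) or the two actions are cost-equal for $0<p_{pub}<1$ (an interior equilibrium). Using the already-given $c_i^N(public,p)$ and $c_i^N(private,p)=c_{priv}$, I would rewrite $w_{pub}(k)=\gamma c_{pub,s}-\min(1,R/k)(\gamma-1)c_{pub,s}$ and reduce the indifference equation $c_i^N(public,p)=c_{priv}$ to the clean form
\[
\bar q(p_{pub}) := \sum_{\sigma=0}^{N-1}\min\!\Bigl(1,\tfrac{R}{\sigma+1}\Bigr)B(\sigma;N-1,p_{pub}) \;=\; \frac{\delta}{\gamma-1}\;=\;\frac{R}{\sigma_0},
\]
i.e. the expected probability that a tagged competitor secures a public spot must equal $R/\sigma_0$.

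Next I would analyze $\bar q(\cdot)$, which is continuous and strictly decreasing in $p_{pub}$ (more competition lowers each win chance), with $\bar q(0)=\min(1,R)=1$ and $\bar q(1)=\min(1,R/N)=R/N$ for $N>R$. Comparing the target $R/\sigma_0$ to this range splits the argument exactly along $N$ versus $\sigma_0$: since $\beta>1$ gives $\delta/(\gamma-1)<1$, and $R/\sigma_0\gtrless R/N \iff N\gtrless\sigma_0$, the target falls below the attainable range precisely when $N\le\sigma_0$. In that regime no interior root exists and $public$ is the best response at $p_{pub}=1$, which I would confirm directly from $w_{pub}(N)\le c_{priv}\iff N\le\sigma_0$ (the same threshold computed in Lemma \ref{le:pe}); by uniqueness this corner profile is the equilibrium, giving $p^{NE}_{pub}=1$. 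When $N>\sigma_0$ the target sits strictly inside $(R/N,1)$, so the intermediate value theorem yields a unique interior root, which by the uniqueness result is the equilibrium.

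Finally I would identify that root in closed form. The natural candidate is the value at which the \emph{expected} competition level equals the pure-equilibrium threshold, $Np_{pub}=\sigma_0$, i.e. $p_{pub}=\sigma_0/N$: replacing the random competitor count $\sigma+1$ by its mean $Np_{pub}$ turns the indifference equation into $\min(1,R/(Np_{pub}))=R/\sigma_0$, and since $Np_{pub}=\sigma_0>R$ this solves to $Np_{pub}=\sigma_0$ exactly, recovering $w_{pub}(\sigma_0)=c_{priv}$ by the definition of $\sigma_0$. The hard (and delicate) part is precisely this last step: the reduced equation involves the \emph{truncated} binomial mean $\bar q$, and because $\min(1,R/x)$ is neither convex nor concave, Jensen does not apply and $\bar q(\sigma_0/N)=R/\sigma_0$ is not an exact algebraic identity for finite $N$. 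I would therefore either justify $p_{pub}=\sigma_0/N$ as the mean-field (large-$N$) solution obtained by equating expected competition to $\sigma_0$, or, if an exact statement is intended, evaluate $\bar q$ by splitting the sum at $\sigma=R$ and using $\frac{1}{\sigma+1}\binom{N-1}{\sigma}=\frac{1}{N}\binom{N}{\sigma+1}$ to collapse the $R/(\sigma+1)$ tail into incomplete binomial sums, then confront the result with $R/\sigma_0$. The monotonicity argument above already guarantees that whatever the exact root is, it is unique, so the entire remaining burden of the theorem is pinning down its value.
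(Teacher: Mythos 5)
Your argument tracks the paper's own proof almost line for line: the paper also derives the indifference equation (its $f(p)=0$ is exactly your $\bar q(p)=R/\sigma_0$, via $f(p)=-\beta+\gamma-(\gamma-1)\bar q(p)$), also establishes a unique root by continuity and strict monotonicity with the boundary values $\lim_{p\to0}f(p)=1-\beta<0$ and $\lim_{p\to1}f(p)=\delta(1-\sigma_0/N)$ (your $\bar q(0)=1$, $\bar q(1)=R/N$), and also handles $N\le\sigma_0$ by falling back on the pure equilibrium of Theorem \ref{thm:pe}. (One side remark: the paper's monotonicity step, $f'(p)>\sum_k B'(k;N-1,p)=0$, is invalid as written, since the $B'(k;N-1,p)$ change sign and multiplying them by weights $\ge 1$ proves nothing; your one-line justification --- more competition stochastically lowers each win chance, and $w_{pub}$ is non-decreasing --- is the correct first-order-stochastic-dominance argument, though you too state it rather than prove it.) The only real divergence is the final step, where the paper writes that ``it may be checked with replacement that $f(\sigma_0/N)=0$,'' while you stop and flag that $\bar q(\sigma_0/N)=R/\sigma_0$ is not an exact algebraic identity for finite $N$.

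On that point you are right and the paper is not: the substitution does not check out. Take $N=2$, $R=1$, $\beta=2$, $\gamma=4$, so $\delta=2$ and $\sigma_0=3/2<N$. Then $\bar q(p)=(1-p)+p/2$, the indifference equation has the unique root $p^*=2(\beta-1)/(\gamma-1)=2/3$, whereas $\sigma_0/N=3/4$ and $f(3/4)=1/8\neq 0$; with $N=3$ and the same costs the root is $(3-\sqrt{5})/2\approx 0.382$ versus $\sigma_0/N=1/2$. Your splitting identity $\frac{1}{\sigma+1}\binom{N-1}{\sigma}=\frac{1}{N}\binom{N}{\sigma+1}$ makes the failure transparent: it gives $\bar q(p)=\Pr[K\le R-1]+\frac{R}{Np}\Pr[K'\ge R+1]$ with $K\sim \mathrm{Bin}(N-1,p)$ and $K'\sim \mathrm{Bin}(N,p)$, and at $p=\sigma_0/N$ this equals $R/\sigma_0$ only if $\Pr[K\le R-1]=\frac{R}{\sigma_0}\Pr[K'\le R]$, which is false in general. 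Note also that your fallback justification needs care: with $R$ fixed and $N\to\infty$ at $Np=\sigma_0$, the competitor count converges to a Poisson($\sigma_0$) variable rather than concentrating, so $p^{NE}_{pub}=\sigma_0/N$ is only asymptotically exact when $R$ and $N$ grow together (as in the paper's numerical regime $R=50$, $N=500$). So your proposal is incomplete as a proof of the theorem as stated --- existence and uniqueness of the root are established, but its claimed value is not --- yet the step you could not fill is precisely the step that fails in the paper's own proof: the closed form $p^{NE}_{pub}=\sigma_0/N$ should be read either as a mean-field approximation in the concentration regime, or replaced by the implicit characterization as the unique root of $\bar q(p)=R/\sigma_0$.
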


\begin{proof}
The proof is given in the Appendix.
\end{proof}

\textbf{Asymmetric mixed-action equilibria:}
In the analysis of pure equilibria in Section \ref{sec:pe}, we showed that there are multiple asymmetric pure equilibria, when the number of drivers exceeds the $\sigma_0$. In general, the derivation of results for asymmetric mixed-action equilibria is much harder than for either their pure or their symmetric counterparts since the search space is much larger. Moreover, asymmetric mixed-equilibria have two more undesirable properties: a) they do not treat all players equally, \ie different players end up with \emph{a-priori} worse chances to come up with a cheap parking spot; b) their realization in practical situations is a much more difficult than their symmetric counterparts.

Therefore, in this and subsequent Sections, we base our analysis and discussion on symmetric equilibria and their (in)efficiency.

\section{Incomplete knowledge of the parking demand}\label{sec:incinfo}

The availability of complete information about the drivers' (\ie players') population is a fairly strong and unrealistic assumption.
%However, this can be a fairly unrealistic assumption, as it posits continuous scalable control message flow within autonomic network complexes.
In this Section we relax it by studying two game variants with incomplete information, where the players either share common probabilistic information about the overall parking demand or are totally uncertain about it. Note that the parking service operator may, depending on the network and information sensing infrastructure at her disposal, provide the competing drivers with different amounts of information about the demand for parking space (\eg historical statistical data about the utilization of on-street parking space).

\subsection{Probabilistic knowledge of parking demand}\label{Distributed_model_pinfo}
In the \emph{Bayesian} model of the game, the drivers determine their actions on the basis of private information, their \emph{types}. The type in this game is a binary variable indicating whether a driver is in search of parking space (\emph{active} player). Every driver knows her own type along with the strategy space, the cost functions, and the possible types of all others. However, she ignores the real state of the game at a particular moment in time, as expressed by the types of the other players, and, hence, she cannot deterministically reason out the actions being played. Instead, she draws on common \emph{prior} probabilistic information about the activity of drivers to derive estimates about the expected cost of her actions.

Formally, the Bayesian parking spot selection game is defined as follows:

\begin{definition}\label{def:bayes_parking_game}
A \emph{Bayesian Parking Spot Selection Game} is a tuple \\ $\Gamma_{B}(N)=(\mathcal{N},\mathcal{R},(w_{j})_{j\in(pub,priv)},(A_{i})_{i\in \mathcal{N}},(\Theta_{i})_{i\in \mathcal{N}},f_{\Theta})$, comprising:
\begin{itemize}
\item $\mathcal{N}$ and $\mathcal{R}$, as defined for $\Gamma(N)$,
    %$where $\mathcal{R}_{pub}$ is the set of public spots, with $R=|\mathcal{R}_{pub}|\geq1$ and $\mathcal{R}_{priv}$ the %set of private spots, with $|\mathcal{R}_{priv}|\geq N$,
%\item $A_{i}=\{public,private,\oslash\}$, the common set of possible actions for each driver $i\in \mathcal{\bar{N}}$,
\item $A_{i}=\{public,private,\oslash\}$, the set of potential actions for each driver $i\in \mathcal{N}$,
\item $\Theta_{i}=\{0,1\}$, the set of types for each driver $i\in \mathcal{N}$, where $1$ stands for active and $0$ for inactive drivers,
\item $S_{i}: \Theta_{i}\rightarrow A_{i}$, the set of possible strategies for each driver $i\in \mathcal{N}$,
\item $c_{i}^{N_B}(s(\vartheta),\vartheta)$, the cost functions for each driver $i\in \mathcal{N}$, for every type profile $\vartheta\in \times_{k=1}^{N}\Theta_{k}$ and strategy profile $s(\vartheta)\in \times_{k=1}^{N}S_{k}$,
\item $f_{\Theta}$, the prior joint probability distribution of the drivers' activity.
\end{itemize}
\end{definition}

%A set of constraints underlies the strategy function for each player.
In $\Gamma_{B}(N)$, all inactive drivers abstain from the game interaction; hence, $s_{i}(\vartheta_{i}=0)=\oslash$. On the contrary, $s_{i}(\vartheta_{i}=1) \in \{public,private\}$, with the active players also randomizing over this subset of $A_i$ choosing mixed-actions. The game is symmetric when, besides the action set, drivers share the same activity distribution. The real number of active players upon each time depends on their types and is given by $n_{act}=\sum_{k}\vartheta_{k}$.
% randomize over the subset $\{public,private\}$ (ref. Section \ref{Parking_spot_selection game}).
%In the strategic games, the cost function for each player is defined on the action profile being played. In the Bayesian games
The action profile is the effect of players' strategies on their types and is noted as $a=(s(\vartheta),\vartheta)\in \times_{k=1}^{N}A_{k}$. The cost $c_{i}^{N_B}(s(\vartheta),\vartheta)$ for the active driver $i$ under the type profile $\vartheta$ and the strategy profile $s(\vartheta)$ is

%\small
%\begin{eqnarray}
%c_{i}^{\bar{N}}(s(\vartheta),\vartheta)= \nonumber \\
%c_{i}^{\bar{N}}(s_{i}(\vartheta),s_{-i}(\vartheta),\vartheta_{i},\vartheta_{-i})= \nonumber \\
%c_{i}^{\sum_{k}\vartheta_{k}}(a_{i},a_{-i}) \nonumber \\
%c_{i}^{\sum_{k}\vartheta_{k}}(a_{i},a_{-i'})
%\label{eq:bayesian_cost}
%\end{eqnarray}

\small
\begin{equation}
c_{i}^{N_B}(s(\vartheta),\vartheta)= c_{i}^{N_B}(s_{i}(\vartheta_{i}),s_{-i}(\vartheta_{-i}),\vartheta_{i},\vartheta_{-i})
\label{eq:bayesian_cost}
\end{equation}
\vspace{0.5pt}
\normalsize

%Therefore, every active player pays according to the formulas in (\ref{equ:cost}), with $N=\sum_{k}\vartheta_{k}$, where $a_{-i'}$ is an abbreviation for the actions of all active players but player $i$. The identity of active players remains a matter of indifference due to the symmetry of the game (\ie all active players share the same action set). The real number of active players $N$ depends on the types the players own. The players access this knowledge only probabilistically. Thus, unlike the games with complete knowledge, in the bayesian implementation the active players determine their strategy exploiting the probabilistic information for the actual volume of active players.

\textbf{Equilibria:} For the Bayesian parking spot selection game, the strategy profile %$s'=(s_{1}(1),...,s_{N}(1))$
$s'\in\times_{k=1}^{N}S_{k}(\vartheta_{k}=1)$ is a Bayesian Nash equilibrium if for all $i\in\mathcal{N}$ with $\vartheta_{i}=1$:
\small
\begin{equation}\label{equ:Bpe}
s_{i}(\vartheta_{i})\in \arg \min_{s'_{i}\in S_{i}}\sum_{\vartheta_{-i}}f_{\Theta}(\vartheta_{-i}/\vartheta_{i})c_{i}^{\sum_{k}\vartheta_{k}}(s_{i}',s_{-i}(\vartheta_{-i}),\vartheta_{i},\vartheta_{-i})
\end{equation}
\normalsize
where $c_{i}^k(s_{i}',s_{-i})$, with $s_{l}(\vartheta_{l}=0)=private$, $\forall l\neq i$, is the cost of driver $i$ under profile $s$ in the game $\Gamma(k)$ and $f_{\Theta}(\vartheta_{-i}/\vartheta_{i})$ the posterior conditional probability of the active drivers \emph{given that} user $i$ is active, as derived from the application of the Bayesian rule. Therefore, $s'$ minimizes the expected cost over all possible combinations of the other drivers' types and strategies so that no active player can further lower its expected cost by unilaterally changing her strategy.

\begin{theorem}\label{thm:sme_bayes}
The Bayesian parking spot selection game $\Gamma_{B}(N)$ has unique symmetric equilibrium profiles $p^{NE_{B}}=(p^{NE_{B}}_{pub},p^{NE_{B}}_{priv})$. More specifically:
\begin{itemize}
  \item a unique (Bayesian-Nash) pure equilibrium with $p^{NE_{B}}_{pub}=1$, if $p_{act}<\frac{\sigma_{0}}{\bar{N}}$,
  \item a unique symmetric mixed-action Bayesian Nash equilibrium with  $p^{NE_{B}}_{pub}=\frac{\sigma_{0}}{Np_{act}}$, if $p_{act}\geq\min(\frac{\sigma_{0}}{N},1)$,
\end{itemize}
where $p^{NE_{B}}_{priv}=1-p^{NE_{B}}_{pub}$ and $\sigma_{0}\in\mathbb{R}$.%=\frac{R(\gamma-1)}{\delta}\in\mathbb{R}$.
\end{theorem}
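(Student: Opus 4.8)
The plan is to reduce the Bayesian equilibrium computation to the complete-information analysis of Theorem~\ref{thm:sme} by absorbing the activity uncertainty into an \emph{effective} probability of competing for on-street parking. First I would fix a tagged active player $i$ and suppose every other active driver plays the common symmetric mixed-action $p=(p_{pub},p_{priv})$. Because the types are drawn independently from the shared activity distribution, the Bayesian posterior $f_{\Theta}(\vartheta_{-i}/\vartheta_{i})$ conditioned on $\vartheta_{i}=1$ coincides with the prior: each of the remaining $N-1$ drivers is independently active with probability $p_{act}$ and, if active, plays $public$ with probability $p_{pub}$, hence competes for a public spot with probability $q:=p_{act}\,p_{pub}$. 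The number of \emph{other} public competitors is therefore $\mathrm{Binomial}(N-1,q)$, and the expected cost of the action $public$ from~(\ref{eq:costs_sym}) becomes
\begin{equation}
c_{i}^{N_B}(public,p)=\sum_{m=0}^{N-1}w_{pub}(m+1)\,B(m;N-1,q)=:g(N,q),
\end{equation}
which is \emph{literally} the complete-information public-cost function evaluated at the effective probability $q$, while $c_{i}^{N_B}(private,p)=c_{priv}$ stays constant. This identity is the crux of the argument.

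With this reduction, the Bayesian best-response condition at competition level $q=p_{act}p_{pub}$ is formally identical to the complete-information one at level $p_{pub}$, so I would reuse the monotonicity behind Theorem~\ref{thm:sme}. The map $g(N,\cdot)$ is non-decreasing in $q$, equals $w_{pub}(1)=c_{pub,s}<c_{priv}$ at $q=0$ (since $\beta>1$), and is strictly increasing once $q$ is large enough to place probability mass on arguments exceeding $R$; because $\sigma_{0}=R(\gamma-1)/\delta>R$, the crossing it makes with the constant $c_{priv}$ is single and, by Theorem~\ref{thm:sme}, occurs exactly at $q=\sigma_{0}/N$. Two regimes then follow. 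If full public participation still leaves $public$ cheaper, i.e. $g(N,p_{act})<c_{priv}\iff p_{act}<\sigma_{0}/N$, the unique best response is $p_{pub}=1$, the pure equilibrium. Otherwise $p_{act}\ge\sigma_{0}/N$, and the strictly increasing cost of $public$ meets $c_{priv}$ at the unique interior point set by $p_{act}p_{pub}=\sigma_{0}/N$, giving $p^{NE_{B}}_{pub}=\sigma_{0}/(Np_{act})\le 1$, the asserted mixed equilibrium.

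Uniqueness of the symmetric equilibrium in each regime is then immediate from the single-crossing property of $g(N,\cdot)$: at most one $q$, hence one $p_{pub}$, can satisfy the equilibrium conditions, and existence is inherited from the same argument (equivalently from Ashlagi \etal). I would close by reconciling the threshold $\min(\sigma_{0}/N,1)$: when $N<\sigma_{0}$ one has $\sigma_{0}/N>1\ge p_{act}$, so $q\le p_{act}\le 1\le\sigma_{0}/N$ keeps $g(N,q)\le c_{priv}$ for every feasible $q$, the interior condition never binds, and only the pure equilibrium survives, matching the $N\le\sigma_{0}$ branch of Theorem~\ref{thm:sme}; at the crossover the two thresholds agree ($\bar N=N$). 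The main obstacle I anticipate is justifying this reduction rigorously for the prescribed $f_{\Theta}$: the clean substitution $q=p_{act}p_{pub}$ hinges on the activities being independent, so that the posterior equals the prior and the per-opponent competition probabilities simply multiply. For a correlated $f_{\Theta}$ the conditional law of $\sigma_{pub}$ would no longer be binomial in a single effective parameter and the crisp correspondence with Theorem~\ref{thm:sme} would break; I would therefore state explicitly that the symmetric i.i.d.\ activity assumption is precisely what makes the effective-probability reduction exact.
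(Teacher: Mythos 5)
Your proof is correct, and it takes a genuinely different --- and in fact slicker --- route than the paper's. The paper works directly with the compound expression $h(p)=c_{i}^{N_B}(public,p)-c_{priv}$, written as a double sum (an outer Binomial$(N-1,p_{act})$ average over the number of active opponents, an inner Binomial$(n_{act},p)$ sum over competitors), proves $h'(p)>0$ by term-by-term differentiation with the cost weights bounded below by one, computes $\lim_{p\rightarrow 0}h(p)=1-\beta<0$, and identifies only the \emph{boundary} value $\lim_{p\rightarrow 1}h(p)=f(p_{act})$, where $f$ is the complete-information indifference function from the proof of Theorem \ref{thm:sme}; the regimes are then split by the sign of $f(p_{act})$ and the root $p=\sigma_{0}/(Np_{act})$ is verified by substitution. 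You instead invoke the Bernoulli thinning identity --- each opponent independently competes with probability $q=p_{act}p_{pub}$, so the mixture of binomials collapses to a single Binomial$(N-1,q)$ --- which yields the strictly stronger exact identity $h(p)=f(p_{act}\,p)$ for \emph{all} $p\in[0,1]$, not just at $p\rightarrow 1$. From there, monotonicity and uniqueness of the symmetric equilibrium are inherited for free from $f$ (no separate derivative computation for $h$ is needed), the root condition $p_{act}p_{pub}=\sigma_{0}/N$ drops out rather than being checked by replacement, and the regime split $p_{act}\lessgtr\sigma_{0}/N$ becomes the transparent statement that the feasible range $q\in[0,p_{act}]$ does or does not contain the single crossing of $g(N,\cdot)$ with $c_{priv}$. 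Your closing caveat is also exactly right and worth keeping: the collapse hinges on the i.i.d.\ activity model that is implicit in the paper's own use of $B(n_{act};N-1,p_{act})$ (a correlated $f_{\Theta}$ would break the single-effective-parameter reduction, and indeed the paper's double-sum proof would break in the same way), and your handling of the $N<\sigma_{0}$ corner --- only the pure equilibrium survives, reading $\bar{N}$ as $N$ --- is consistent with the theorem's $\min(\sigma_{0}/N,1)$ threshold. One cosmetic imprecision: for $N>R$ the function $f$ is strictly increasing on all of $(0,1)$, since $B(k;N-1,q)$ puts mass on every $k$ for any interior $q$, so you need no caveat about $q$ being ``large enough.''
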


\begin{proof}
We present the proof in the Appendix.
\end{proof}

\subsection{Strictly incomplete information about parking demand}\label{Distributed_model_ninfo}

The worst-case scenario with respect to the information drivers avail for making their decisions is represented by the \emph{pre-Bayesian} game variant.
%analysis of the parking spot selection game over unknown information is realized through the notion of pre-Bayesian games. In %these games, no player has any probabilistic information about the real state of others. Hence, all players ignore others' %private type or equivalently their valid action set.
In the pre-Bayesian parking spot selection game, the drivers may avail some knowledge about the upper limit of the vehicles that are \emph{potential} competitors for parking resources, yet their actual number is not known, not even probabilistically.

%In essence, the pre-Bayesian games differ from their Bayesian counterpart on the - commonly known - probability distribution %about the private types that the players exploit for their decisional operations.
Pre-Bayesian games do not necessarily have \emph{ex-post} Nash equilibria, even in mixed actions. On the other hand, all quasi-concave pre-Bayesian games \emph{do} have at least one mixed-strategy \emph{safety-level equilibrium} \cite{Ashlagi06}. %\footnote{The result holds for quasi-concave games.}
%as long as they satisfy some requirements
In the safety-level equilibrium, every player minimizes over her strategy set $S_i$ the worst-case (maximum) expected cost she may suffer over all possible types and actions of her competitors $(S_{-i},\Theta_{-i})$.

%For this type of games new classes of equilibrium concepts have been introduced. In particular, the \emph{ex-post Nash equilibrium} consists of strategies that, for every joint type profile, result in actions which are in Nash equilibrium in the corresponding strategic game []. Furthermore, a profile of strategies is an \emph{minimax regret equilibrium} if for every type profile, every player chooses the action that minimizes the maximum difference in cost with every other action (that his type allows) []. Another one is the \emph{safety-level equilibrium}, where for every type profile, every player chooses the action (that his type allows) that maximizes the minimum respective cost incurred []. The proof of their (in)existence, all the more so their computation, is not a trivial task. However, some known results can be applied in the pre-Bayesian variant of the parking spot selection game in order to extract insights to the stable conditions. In particular, the authors in [] prove that in every resource selection game with non-decreasing cost functions (namely, as in case of the parking spot selection game) the respective pre-Bayesian game has a unique symmetric safety-level equilibrium in which the players perform the unique symmetric equilibrium action of the primary strategic game.

The result of interest for our pre-Bayesian variant of the parking spot selection model $\Gamma_{pB}(N)$ is due to \cite{Ashlagi06}.
\begin{proposition}\label{prop:pre_bayes}
An action profile $a$ is the unique symmetric mixed-action safety-level equilibrium of the pre-Bayesian parking spot selection game, $\Gamma_{pB}(N)$, with non-decreasing resource cost functions, iff $a$ is the unique symmetric mixed-action equilibrium of the respective strategic game with deterministic knowledge of the number of players, $\Gamma(N)$.
\end{proposition}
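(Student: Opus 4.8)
The plan is to prove the proposition as a specialization of the pre-Bayesian resource-selection result of Ashlagi, Monderer and Tennenholtz~\cite{Ashlagi06}: I would show that, under non-decreasing cost functions, the minimax (safety-level) best-response problem defining the equilibrium of $\Gamma_{pB}(N)$ collapses \emph{exactly} onto the ordinary best-response problem of the complete-information game $\Gamma(N)$ in which all $N$ potential drivers are active. Once this reduction is in place the stated equivalence is immediate in both directions, and uniqueness is inherited from the unique symmetric mixed equilibrium $p^{NE}$ guaranteed by Theorem~\ref{thm:sme}.

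First I would write out the quantity a player minimizes in a safety-level equilibrium. Fixing a symmetric opponent strategy that plays the mixed action $p=(p_{pub},p_{priv})$ whenever active and abstains otherwise, an active player who contemplates the mixed action $q=(q_{pub},q_{priv})$ faces the worst-case expected cost
\[
\max_{0\le m\le N-1}\Big[\,q_{pub}\sum_{\sigma=0}^{m} w_{pub}(\sigma+1)\,B(\sigma;m,p_{pub}) + q_{priv}\,c_{priv}\,\Big],
\]
where $m=\sum_{l\neq i}\vartheta_{l}$ is the number of active opponents, chosen adversarially by nature. By symmetry only the count $m$ (not the identities in $\vartheta_{-i}$) matters, and inactive opponents neither compete for public spots nor perturb the constant private cost, so $m$ is the sole adversarial lever.

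The crux, and the step I expect to be the main obstacle, is to show that this worst case is always attained at $m=N-1$, i.e.\ when every potential competitor is active. The argument rests on two monotonicity facts: $w_{pub}(\cdot)$ is non-decreasing in the number of competitors by hypothesis, and the number of public-seeking opponents is distributed as $B(\cdot;m,p_{pub})$, a binomial that is stochastically increasing in $m$ (a standard coupling: adding one more active opponent can only add, never remove, a competitor). Composing a non-decreasing function with a stochastically increasing family shows that the inner expectation $\sum_{\sigma} w_{pub}(\sigma+1)B(\sigma;m,p_{pub})$ is non-decreasing in $m$; since $c_{priv}$ is independent of $m$ and $q_{pub},q_{priv}\ge 0$, the whole bracket is maximized at $m=N-1$ for \emph{every} $q$. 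Care is needed only to confirm that nature has no adversarial freedom beyond the count $m$: opponents' actions are pinned down by their fixed symmetric strategy as functions of their types, so the maximization over $(S_{-i},\Theta_{-i})$ genuinely reduces to the maximization over $m$ that this step exhausts.

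With the worst case pinned at $m=N-1$, the displayed quantity becomes exactly the $\Gamma(N)$ expected cost of playing $q$ against $N-1$ opponents playing $p$, cf.\ (\ref{eq:costs_sym}) and (\ref{eq:cost_mix_sym}). Consequently $\min_{q}\max_{m}[\cdot]=\min_{q} c_{i}^{N}(q,p)$, so a strategy is a safety-level best response in $\Gamma_{pB}(N)$ if and only if it is a best response in $\Gamma(N)$; imposing symmetry ($q=p$) renders the two equilibrium conditions identical, whence the solution sets coincide and both directions of the ``iff'' follow. Finally, Theorem~\ref{thm:sme} (itself an instance of (\cite{Ashlagi06}, Theorem~1)) supplies a \emph{unique} symmetric mixed equilibrium of $\Gamma(N)$; transported through this equivalence it yields the unique symmetric mixed-action safety-level equilibrium of $\Gamma_{pB}(N)$, completing the proof.
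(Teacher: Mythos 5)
Your proof is correct, but you should know that the paper contains no proof of this proposition at all: the result is imported verbatim from Ashlagi, Monderer and Tennenholtz \cite{Ashlagi06}, who establish it for general resource selection games with an unknown number of players and non-decreasing resource cost functions. What you have produced is a self-contained reconstruction of that cited theorem, specialized to $\Gamma_{pB}(N)$, and its three load-bearing steps are all sound. First, the adversary's only lever is the count $m$ of active opponents, because a safety-level best response fixes the opponents' \emph{strategies} and maximizes only over their \emph{types} --- a point you were right to make explicit, since the paper's loose phrasing ``over all possible types and actions of her competitors $(S_{-i},\Theta_{-i})$'' could be misread as a maximization over opponents' strategies as well, under which the reduction would actually fail: the worst case would then be all $N-1$ opponents playing $public$ with certainty, and for $N>\sigma_{0}$ the resulting minimax strategy would be pure $private$ rather than $p_{pub}=\sigma_{0}/N$. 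Second, the stochastic monotonicity of $B(\cdot\,;m,p_{pub})$ in $m$, composed with the non-decreasing $w_{pub}$ (and the constant $w_{priv}$), pins the worst case at $m=N-1$ for every deviation $q$, including weakly in the edge case $q_{pub}=0$. Third, once the safety-level cost of $q$ against a symmetric $p$ is identified with $c_{i}^{N}(q,p)$, the two best-response problems --- hence the symmetric equilibrium sets, hence uniqueness via Theorem \ref{thm:sme} --- coincide, which gives both directions of the ``iff'' at once. Relative to the paper's bare citation, your derivation buys a verification that $\Gamma_{pB}(N)$ actually satisfies the hypotheses of the imported theorem and makes transparent the mechanism behind the less-is-more discussion in Section \ref{sec:results}: under total demand uncertainty, rational pessimism makes every driver behave as if all $N$ potential competitors were present.
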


We discuss the implications of this result for the efficiency of the equilibria behaviors of the drivers in Section \ref{sec:results}.

%\begin{proposition}\label{prop:pre_bayes}
%The minimal social cost in the parking spot selection game $\Gamma(N)$ is attained at the symmetric mixed action of the pre-Bayesian game $\Gamma_{pB}(\frac{N(\gamma-1)}{\delta})$.
%\end{proposition}
%
%\begin{proof}
%
%\end{proof}

%The minimal social cost in the parking spot selection game $\Gamma(N)$ is attained at the symmetric mixed action of the %pre-Bayesian game $\Gamma_{pB}(\frac{N(\gamma-1)}{\delta})$. 
\section{Numerical results}\label{sec:results}
The analysis in Sections \ref{sec:cinfo} and \ref{sec:incinfo} suggests that three important factors affect the (in)efficiency of the game equilibrium profiles. The first two are the charging policy for on-street and private parking space and their relative location, which determines the overhead parameter $\delta$ of failed attempts for on-street parking space. The third factor is the information available to the drivers when playing the game. In the following, we illustrate their impact on the game outcome and discuss their implications for real systems.

For the numerical results we adopt per-time unit normalized values used in the typical municipal parking systems in big European cities \cite{cityeu}. The parking fee for public space is set to $c_{pub,s}=1$ unit
%\euro~ for $60$-minute period,
whereas the cost of private parking space $\beta$ ranges in $(1,16]$ units and the excess cost $\delta$ in $[1, 5]$ units.
%We analyze the dynamics of the system under low to medium parking demand levels and limited public parking supply, during the time window over which the parking requests are issued, so that the capacity of current private parking services in the area suffices to fulfil all parking requests.
We consider various parking demand levels assuming that private parking facilities in the area suffice to fulfil all parking requests.

\subsection{Impact of charging policy}

%%%%mobicom
%\begin{figure}[t]
%%\vspace{-15pt}
%\begin{center}
%\begin{tabular}{cc}
% \hspace{-10pt} \includegraphics[scale=0.3]{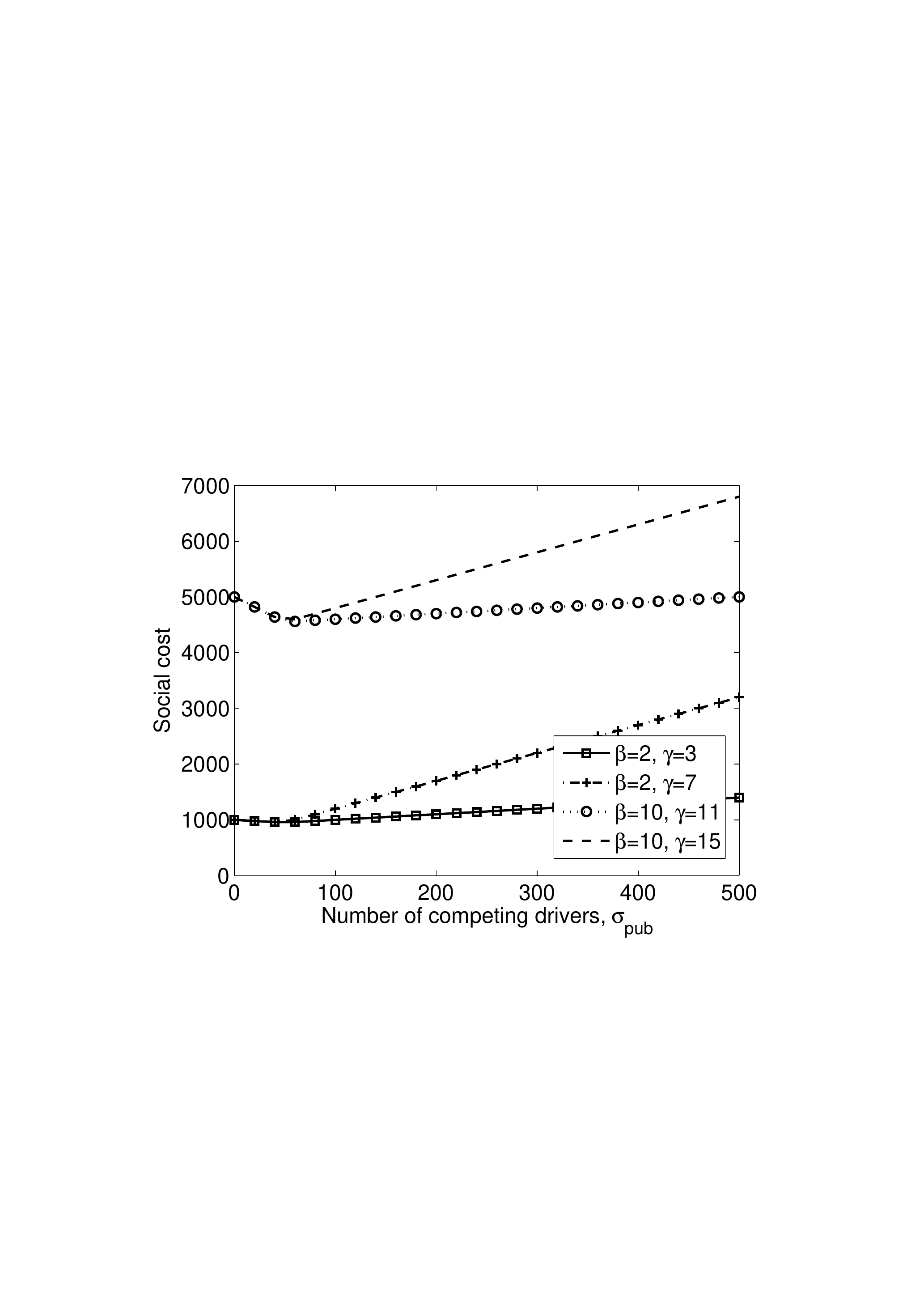}
%  %\label{fig:sc_pure}
%  &
%  \hspace{-17pt} \includegraphics[scale=0.3]{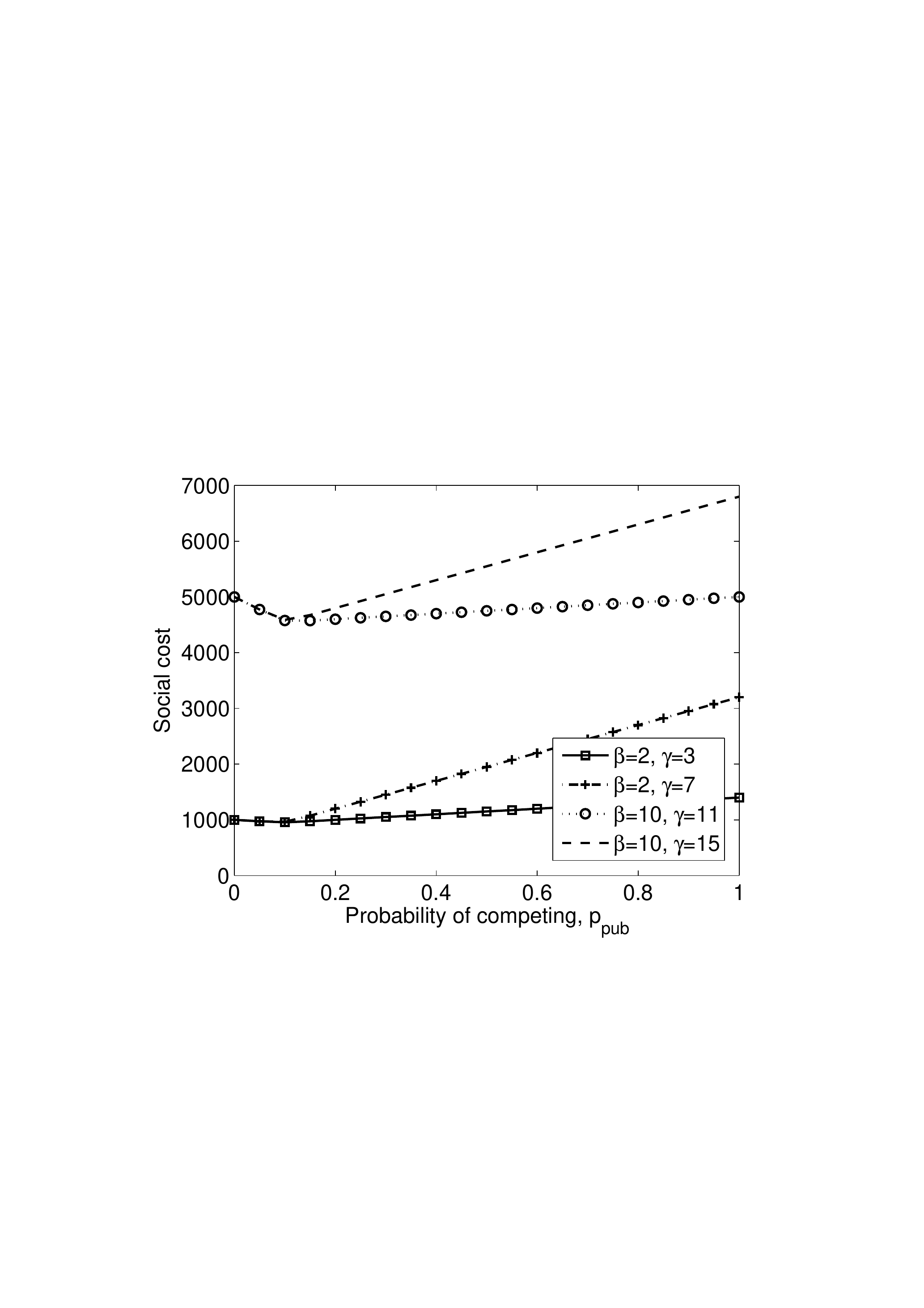}
%  %\label{fig:sc_mixed}
%\\
%
%\begin{scriptsize} \hspace{-3pt} a. Pure-action profiles \end{scriptsize}  & \hspace{-25pt}
%\begin{scriptsize} b. Symmetric mixed-action profiles  \end{scriptsize}
%\end{tabular}
%\end{center}
%\vspace{-15pt}
%\caption{Social cost for $N=500$ drivers when exactly $\sigma_{pub}$ drivers compete (a) or when all drivers decide to compete with probability $p_{pub}$ (b), for $R=50$ public parking spots, under different charging policies.\label{fig:sc_complete}}
%\vspace{-6pt}
%\end{figure}

\begin{figure}[t]
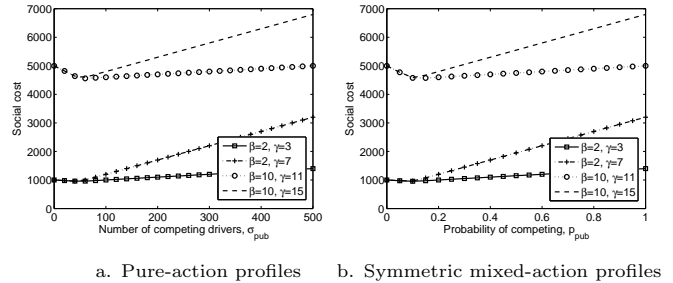

\vspace{-100pt}
\begin{center}
\begin{tabular}{cc}
 \hspace{-40pt}  \vspace{-70pt} 
 \includegraphics[scale=0.3]{social_cost_pure_50}
  \hspace{-60pt} 
  \includegraphics[scale=0.3]{social_cost_mixed_50}
\\

\begin{scriptsize} \hspace{-8pt} a. Pure-action profiles ~~~ b. Symmetric mixed-action profiles  \end{scriptsize}
\end{tabular}
\end{center}
\vspace{-10pt}
\caption{Social cost for $N=500$ drivers when exactly $\sigma_{pub}$ drivers compete (a) or when all drivers decide to compete with probability $p_{pub}$ (b), for $R=50$ public parking spots, under different charging policies.\label{fig:sc_complete}}
\vspace{-10pt}
\end{figure}

%%%mobicom
%\begin{figure*}[t]
%\vspace{-15pt}
%\begin{center}
%\begin{tabular}{cc}
%\hspace{-15pt}
%  \includegraphics[scale=0.35]{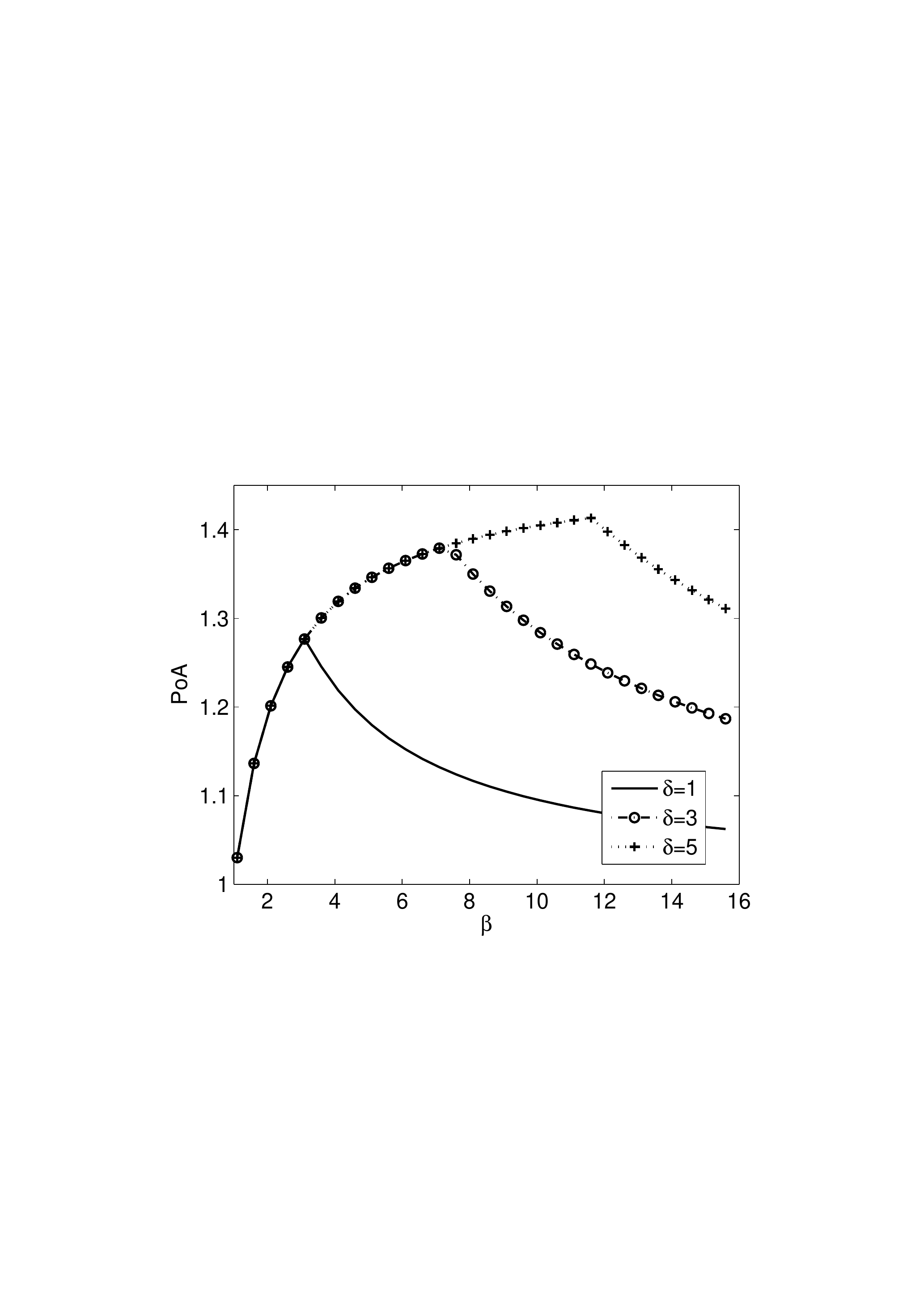}
%  \includegraphics[scale=0.48]{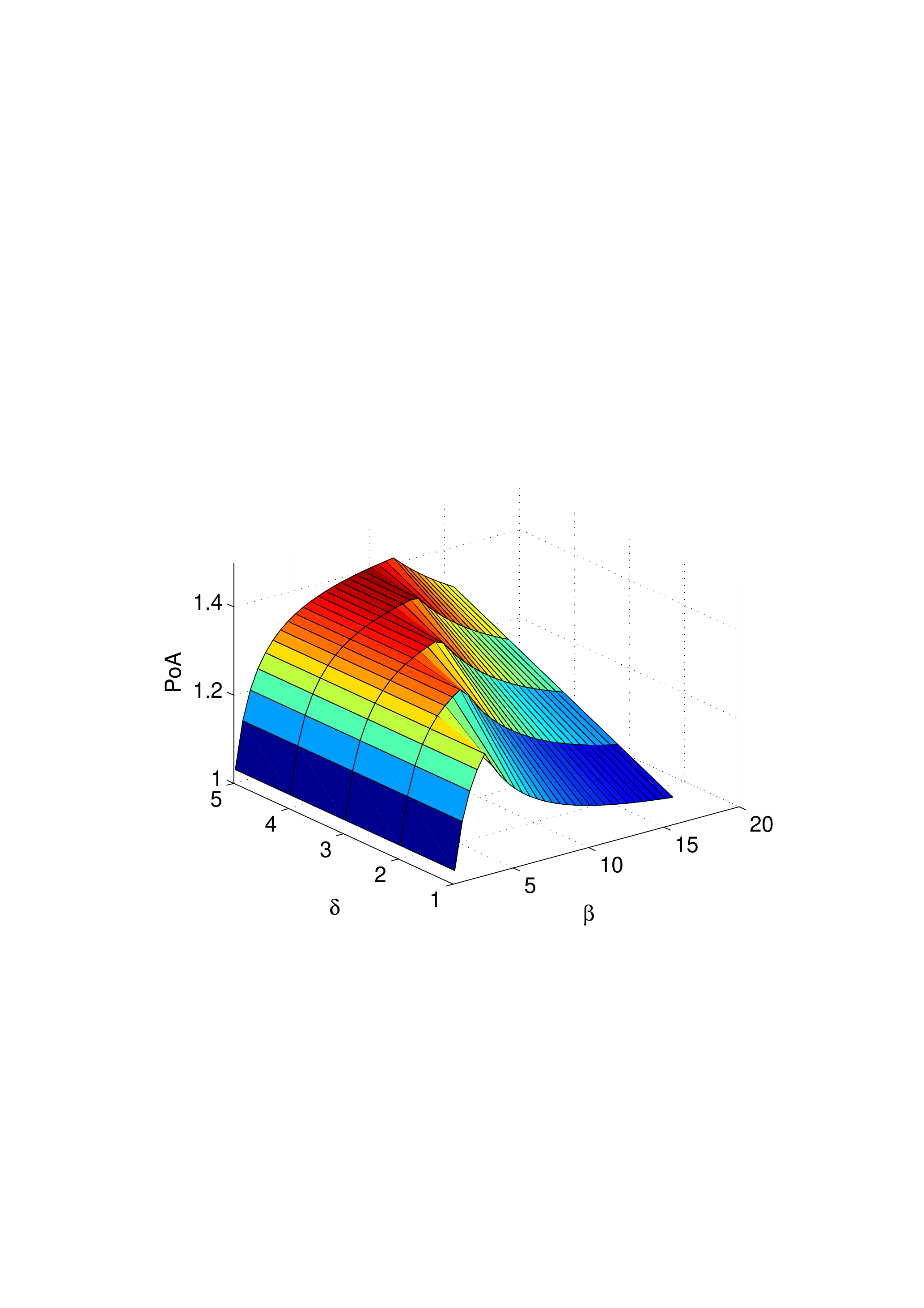}
%  \includegraphics[scale=0.36]{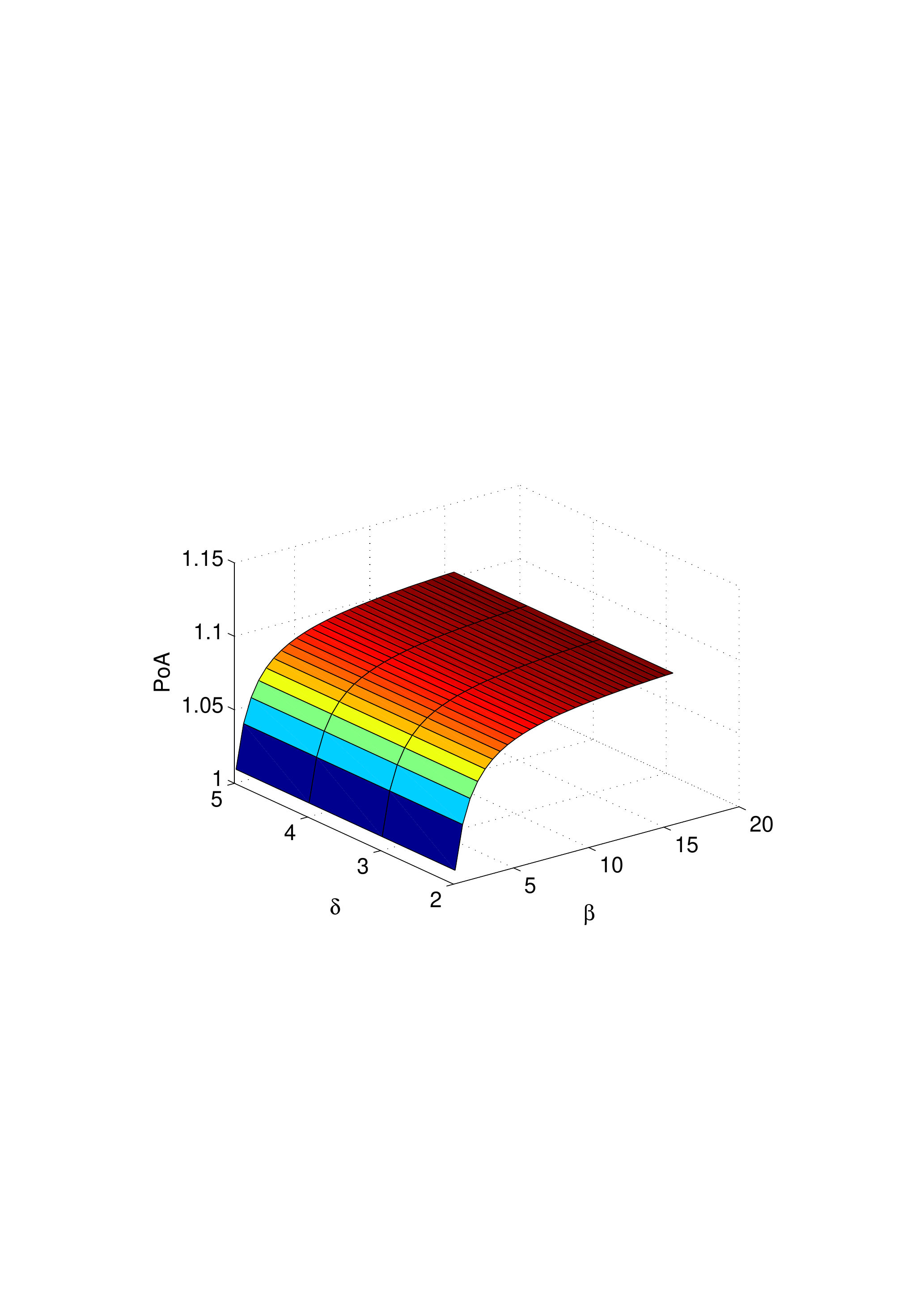}
%\\

%\begin{scriptsize} a. $2D~ \verb"PoA"(\beta), R=160$ ~~~~~~~~~~~~~~~~~~~~~~~~~~~~~~~~~~~~ b. $3D~ \verb"PoA"(\beta,\delta), R=160$ ~~~~~~~~~~~~~~~~~~~~~~~~~~~~~~~~~~
% c. $3D~ \verb"PoA"(\beta,\delta), R=50$  \end{scriptsize}
%\end{tabular}
%\end{center}
%\vspace{-15pt}
%\caption{Price of Anarchy for $N=500$ and varying $R$, under different charging policies.\label{fig:poa_complete}}
%\vspace{-8pt}
%\end{figure*}

\begin{figure*}[t]
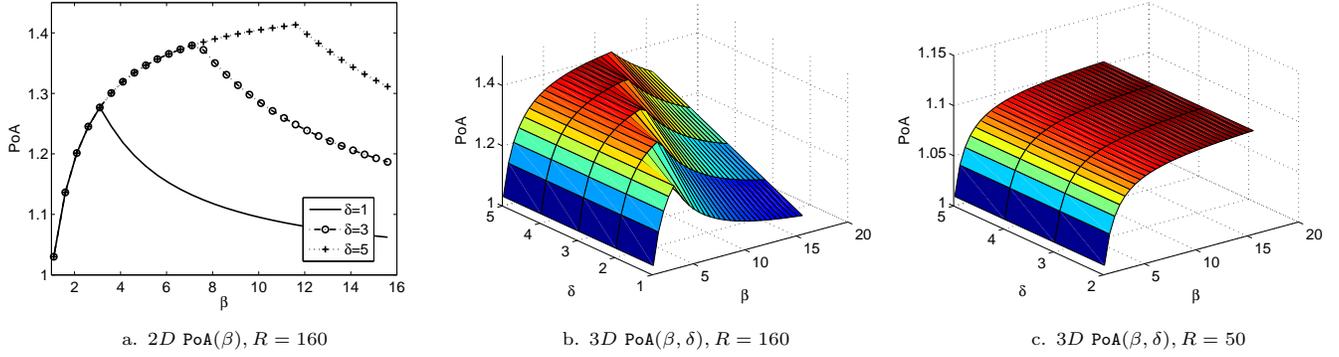

\vspace{-100pt}
\begin{center}
\begin{tabular}{ccc}
\hspace{-50pt}\vspace{-95pt}
\includegraphics[scale=0.40]{poa_1}
\hspace{-75pt}
\includegraphics[scale=0.40]{poa_2}
\hspace{-75pt}
\includegraphics[scale=0.40]{poa_3}
\\
\begin{scriptsize} \hspace{-30pt} a. $2D~ \verb"PoA"(\beta), R=160$ ~~~~~~~~~~~~~~~~~~~~~~~~~~~~~~~ b. $3D~ \verb"PoA"(\beta,\delta), R=160$ ~~~~~~~~~~~~~~~~~~~~~~~~~~~~~~~~
 c. $3D~ \verb"PoA"(\beta,\delta), R=50$ \end{scriptsize}
\end{tabular}
\end{center}
\vspace{-10pt}
\caption{Price of Anarchy for $N=500$ and varying $R$, under different charging policies.\label{fig:poa_complete}}
\vspace{-10pt}
\end{figure*}

%\section{Social cost in pure-action and mixed-action profiles}\label{sec:sc_complete}
 Figure \ref{fig:sc_complete} plots the social costs $C(\sigma_{pub})$ under pure (Eq. $11$) and $C(p_{pub})$ under mixed-action strategies as a function of the number of competing drivers $\sigma_{pub}$ and competition probability $p_{pub}$, respectively, where
 %under four pairs of $(\beta,\gamma)$ values.
% Concerning the corresponding cost functions, $C(\sigma_{pub})$ is given by ($11$) and $C(p)$ by the equation
\vspace{-5pt}
 \small
 \begin{eqnarray}
  C(p) &=& c_{pub,s}\sum_{\sigma=0}^N {N \choose \sigma}p^\sigma(1-p)^{N-\sigma}\cdot \nonumber \\
  && [min(\sigma,R)+max(0,\sigma-R)\gamma+(N-\sigma)\beta]
 \end{eqnarray}
 \normalsize
 %The corresponding cost functions, $C(\sigma_{pub})$ and $C(p)$, are given by the equations
% \vspace{-12pt}
% \small
% \begin{eqnarray}
% \small
% C(\sigma_{pub})&=&\left\{
% \begin{array}{l l} c_{pub,s}\big(\sigma_{pub}+(N-\sigma_{pub})\beta\big),~~~~~~~~\text{if } \sigma_{pub}\leq R \\
%                    c_{pub,s}\big(\sigma_{pub}\delta-R(\gamma-1)+\beta N\big),~~\text{if } \sigma_{pub}> R \end{array} \right. \nonumber \\
% C(p) &=& \sum_{m=0}^N {N \choose m}p^m(1-p)^{N-m} [min(m,R)+max(m-R,0)\gamma+(N-m)\beta]c_{pub} \nonumber
% \end{eqnarray}
% \normalsize
Figure \ref{fig:sc_complete} motivates two remarks. Firstly, the social cost curves for pure and mixed-action profiles have the same shape. This comes as no surprise since for given $N$, any value for the expected number of competing players $0\leq\sigma_{pub}\leq N$ can be realized through appropriate choice of the symmetric mixed-action profile $p$. Secondly, the cost is minimized when the number of competing drivers is equal to the number of on-street parking spots. The cost rises when either competition exceeds the available on-street parking capacity or drivers are overconservative in competing for on-street parking. In both cases, the drivers pay the penalty of the lack of coordination in their decisions. The deviation from optimal grows faster with increasing price differential between the on-street and private parking space.

Whereas an optimal centralized mechanism would assign exactly $\min(N,R)$ public parking spots to $\min(N,R)$ drivers, if $N>R$, in the worst-case equilibrium the size of drivers' population that actually competes for on-street parking spots exceeds the real parking capacity by a factor $\sigma_{0}$ which is a function of $R$, $\beta$ and $\gamma$ (equivalently, $\delta$) (see Lemma \ref{le:pe}). This inefficiency is captured in the $\verb"PoA"$ plots in Figure \ref{fig:poa_complete} for $\beta$ and $\delta$ ranging in $[1.1, 16]$ and $[1,5]$, respectively. The plots illustrate the following trends:

\textbf{Fixed $\delta$ - varying $\beta$:} For $N\leq \sigma_{0}$ or, equivalently, for $\beta\geq\frac{\delta(N-R)+R}{R}$, %from Proposition (\ref{prop:poa}),
it holds that
%\begin{equation}\label{eq:der_poa_beta_1}
$\frac{\vartheta PoA}{\vartheta\beta}<0$
%\end{equation}
and therefore, the $\verb"PoA"$ is strictly decreasing in $\beta$. %Hence the social cost converges to the optimum as $\beta\rightarrow\infty$, practically impossible.
On the contrary, for $\beta<\frac{\delta(N-R)+R}{R}$, the $\verb"PoA"$ is strictly increasing in $\beta$, since
%\begin{equation}\label{eq:der_poa_beta_2}
$\frac{\vartheta PoA}{\vartheta\beta}>0$.
%\end{equation}
%Hence, as the cost $\beta\rightarrow 1$, the  $\verb"PoA"\rightarrow 1$.

\textbf{Fixed $\beta$ - varying $\delta$:} For $N\leq \sigma_{0}$ or, equivalently, for $\delta\leq\frac{R(\beta-1)}{N-R}$ we get
%\begin{equation}\label{eq:der_poa_delta_1}
$\frac{\vartheta PoA}{\vartheta\delta}>0$.
%\end{equation}
Therefore, the $\verb"PoA"$ is strictly increasing in $\delta$. %and hence the social cost converges to the optimal as the distance at which the private parking facility is located tends to infinity.
For $\delta>\frac{R(\beta-1)}{N-R}$, we get
%\begin{equation}\label{eq:der_poa_delta_2}
$\frac{\vartheta PoA}{\vartheta\delta}=0$.
%\end{equation}
Hence, if $\delta$
%the distance between the areas of public and private parking spots
exceeds $\frac{R(\beta-1)}{N-R}$, $\verb"PoA"$ is insensitive to changes of the excess cost $\delta$.
%
%%we can upper bound the social cost and hence the $\verb"PoA"$, by $1+\varepsilon$ with $\varepsilon>0$:
%%\begin{equation}\label{eq:epsilon_beta_1}
%%\verb"PoA"=\frac{\gamma N-R(\gamma-1)}{R+\beta(N-R)}=1+\varepsilon,\Rightarrow
%%\beta=\frac{\delta(N-R)-\varepsilon R}{\varepsilon(N-R)}
%%\end{equation}
%%
%%Similarly, for $\beta<\frac{\delta(N-R)+R}{R}$, it holds that :
%%\begin{equation}\label{eq:epsilon_beta_2}
%%\verb"PoA"=\frac{\sigma_{0}\delta-R(\gamma-1)+\beta N}{R+\beta(N-R)}
%%\= 1+\varepsilon,\varepsilon>0\Rightarrow
%%\beta=\frac{R(1+\varepsilon)}{R-\varepsilon(N-R)}
%%\end{equation}
%%
%%Furthermore, the cost relation of the two parking options requires $\beta>1$. Hence
%%
%%\begin{equation}\label{eq:epsilon_1}
%%\varepsilon<\frac{\delta(N-R)}{N}
%%\end{equation}
%%
%%However, $\beta$ is a positive real number. Hence,
%%\begin{equation}\label{eq:epsilon_2}
%%\varepsilon<\frac{R}{N-R}
%%\end{equation}
%%
%%which follows Proposition \ref{prop:bound_poa}.
%
%%for a given travel distance between the two parking options (\ie parameter $\delta$), increasing the cost for the private parking spots (\ie parameter $\beta$), may repel drivers from this option, rendering the distributed approach significantly suboptimal.
%%
%\vspace{-0.05cm}

Practically, the equilibrium strategy emerging from the current-practice parking search behavior, approximates the optimal coordinated mechanism when the operation of private parking facilities accounts for drivers' preferences as well as estimates of the typical parking demand and supply. More specifically, if, as part of the pricing policy, the cost of private parking is less than $\frac{\delta(N-R)+R}{R}$ times the cost of on-street parking, then the social cost in the equilibrium profile approximates the optimal social cost as the price differential between public and private parking decreases.
% the more optimum the social cost becomes.
This result is inline with the statement in \cite{Larson10}, arguing that ``price differentials between on-street and off-street parking should be reduced in order to reduce traffic congestion''.  Note that the $\verb"PoA"$ metric also decreases monotonically for high values of the private parking cost when the private parking operator desires to gain more than $\frac{\delta(N-R)+R}{R}$ times the cost of on-street parking towards a bound that depends on the excess cost $\delta$. Nevertheless, these operating points correspond to high absolute social cost, \ie the minimum achievable social cost is already unfavorable due to the high fee paid by $N-R$ drivers that use the private parking space (see Fig. \ref{fig:sc_complete}).
%in the cost for private parking, significantly beyond the extra cost $\delta$, optimizes $\verb"PoA"$, it actually harms the social welfare, when %talking in absolute terms (see Fig. \ref{fig:sc_complete}).
On the other hand, there are instances, as in case of $R=50$ (see Fig. \ref{fig:poa_complete}), where the value $\frac{\delta(N-R)+R}{R}$ consists a non-realistic option for the cost of private parking space, already for $\delta>1$. Thus, contrary to the previous case, $\verb"PoA"$ only improves as the cost for private parking decreases. Finally, for given cost of the private parking space, the social cost can be optimized by locating the private facility in the proximity of the on-street parking spots so that the additional travel distance is reduced and the excess cost remains below $\frac{R(\beta-1)}{N-R}$.

%In other words, the equilibrium strategy emerging from the current-practice parking search behavior, can approximate the optimal coordinated mechanism
%% uncoordinated scheme can reach the advantageous coordinated mechanism
%under a smart operation of the private space that accounts for public's preferences. More specifically, by locating the private parking facilities in the proximity of the public parking area (\ie low $\delta$ values) and adapting properly the price differentials between public and private parking (\ie low $\beta$ values). 

\subsection{Impact of information about competition}

%\vspace{-5pt}
Looking at the mixed-action equilibria,
%The randomization over the pure action set implies players' inclination on either of the two parking options. For the strategic game,
Theorem \ref{thm:sme} indicates that drivers' intention to compete for public parking resources is shaped by the charging policy, the number of players and the public parking capacity. Indeed, players start to withdraw from competition as competition intensity rises over the threshold $\sigma_{0}=\frac{R(\gamma-1)}{\delta}$. For the Bayesian implementation, the rationale behind active players' behavior is almost the same. The only difference is that the players adjust their strategies on estimations for the competition level, based on the commonly known probabilistic information. Therefore, the probability to compete decreases with the expected number of competitors $Np_{act}$, if this number exceeds the threshold $\sigma_{0}$ of the strategic games (see Theorem \ref{thm:sme_bayes}). Furthermore, for both game formulations, players start to renege from competition as the distance between public and private parking facilities (\ie $\delta$) is extended or the number of opportunities in public parking decreases (\ie $R$) or the price for private parking reservation drops (\ie $\beta$). Figure \ref{fig:p_act} depicts the effect of these parameters on the equilibrium mixed-action, for strategic (\ie $p_{act}=1$) and Bayesian games (\ie $p_{act}\in\{0.5,0.7\}$).

%%%mobicom
%\begin{figure}[t]
%%\vspace{-15pt}
%\begin{center}
%\begin{tabular}{cc}
% \hspace{-10pt} \includegraphics[scale=0.3]{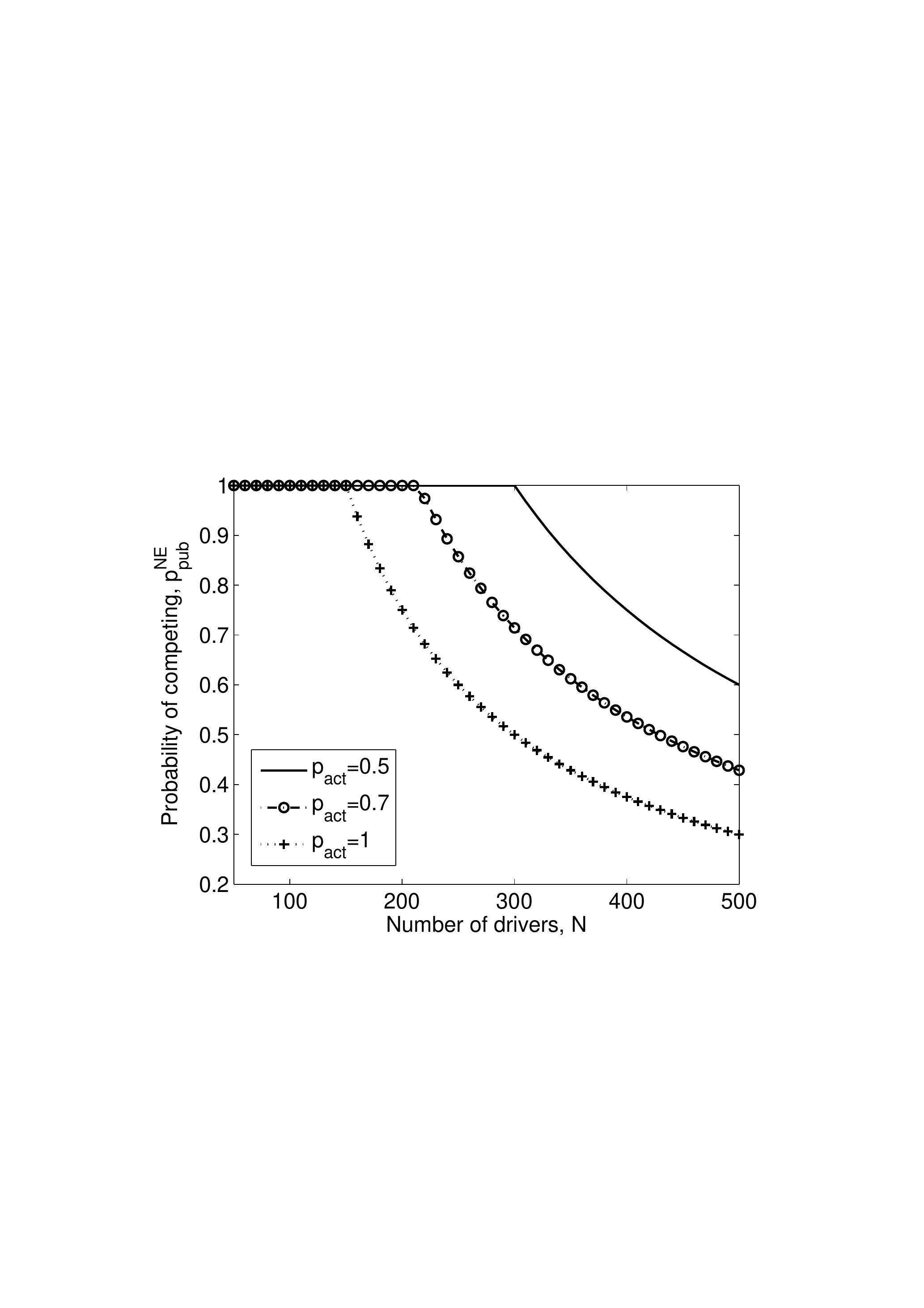}
% \hspace{-15pt} \includegraphics[scale=0.3]{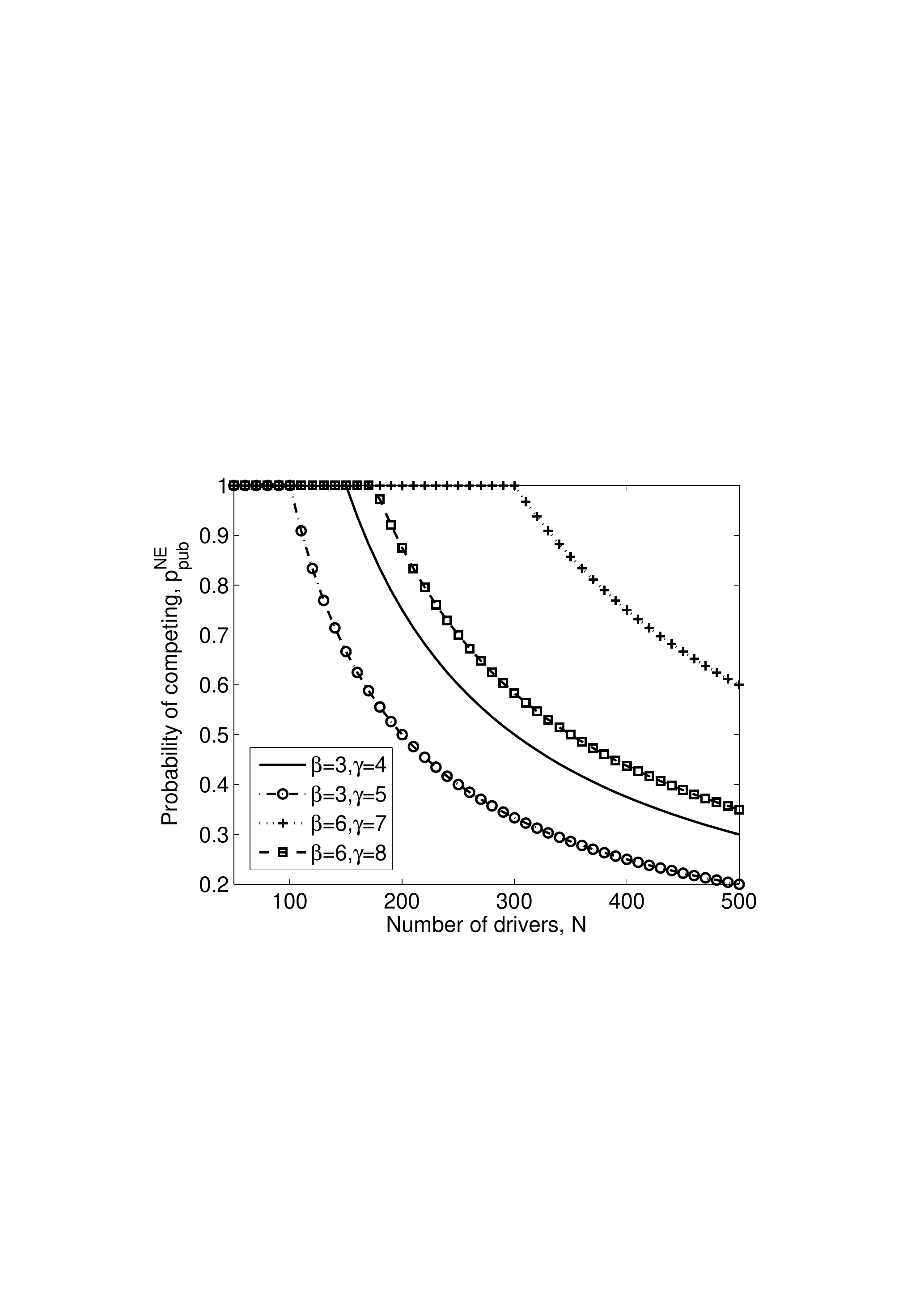}
%  \end{tabular}
%\end{center}
%\vspace{-15pt}
%\caption{Probability of competing in equilibrium, for $R=50$. Left: Strategic and Bayesian games under fixed charging scheme $\beta=5,\gamma=7$. Right: Strategic games under various charging schemes $\beta\in\{3,6\},\gamma\in\{4,5,7,8\}$.\label{fig:p_act}}
%\vspace{-8pt}
%\end{figure}

\begin{figure}[t]
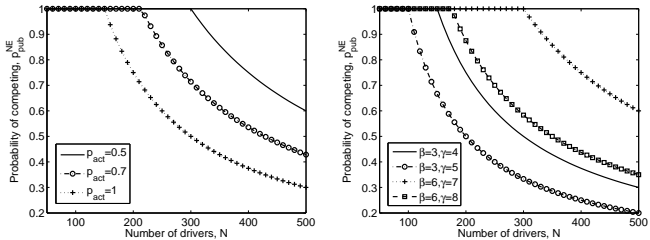

\vspace{-100pt}
\begin{center}
\begin{tabular}{cc}
 \hspace{-40pt} \includegraphics[scale=0.3]{p_act1}
 \hspace{-60pt} \includegraphics[scale=0.3]{p_act2}
  \end{tabular}
\end{center}
\vspace{-85pt}
\caption{Probability of competing in equilibrium, for $R=50$. Left: Strategic and Bayesian games under fixed charging scheme $\beta=5,\gamma=7$. Right: Strategic games under various charging schemes $\beta\in\{3,6\},\gamma\in\{4,5,7,8\}$.\label{fig:p_act}}
\vspace{-10pt}
\end{figure}

%\vspace{-1pt}
\textbf{Less-is-more phenomena under uncertainty:}\label{sec:less_more} \\Less intuitive are the game dynamics in its pre-Bayesian variant, when users only avail an estimate of the maximum number of drivers that are \emph{potentially} interested in parking space. From Proposition \ref{prop:pre_bayes}, the mixed-action safety-level equilibrium corresponds to the mixed action equilibrium of the strategic game $\Gamma(N)$. However, we have seen that, when the players outnumber the on-street parking capacity: a) the mixed-action equilibrium in the strategic game generates higher expected number of competitors than the optimal value $R$ (see Theorem \ref{thm:sme}); b) the social cost conditionally increases with the probability of competing (see Fig. \ref{fig:sc_complete}); c) the probability of competition decreases with the number of players $N$ (see Fig. \ref{fig:p_act}). Therefore, at the safety-level equilibrium of the game, the drivers end up randomizing the pure action $public$ with a lower probability than that corresponding to the game they actually play, with $k\leq N$ players. Hence, the resulting number of competing vehicles is smaller and, cumulatively, they may end up paying less than they would if they knew deterministically the competition they face.

%\vspace{-1pt}
One question that becomes relevant is for which (real) number $K$ of competing players do the drivers end up paying the \emph{optimal} cost. Practically, if $p_{N}^{NE}=(p_{pub,N}^{NE},\\p_{priv,N}^{NE})$ denotes the symmetric mixed-action equilibrium for $\Gamma(N)$, we are looking for the value of $K$ satisfying:

%\small
\vspace{-8pt}
\begin{equation}\label{eq:exp_comp}
Kp_{pub,N}=R \Rightarrow K =\frac{RN}{\sigma_{0}} = \frac{\delta}{\gamma-1}N \nonumber
\end{equation}
\normalsize
%Thus, by (\ref{eq:exp_comp}) and Theorem \ref{thm:sme}
%\small
%\begin{equation}
%\vspace{-7pt}
%K\frac{\sigma_{0}}{N}=R\Rightarrow K =\frac{RN}{\sigma_{0}} = \frac{\delta}{\gamma-1}N \nonumber
%\end{equation}
%\normalsize
namely, when $\frac{\delta}{\gamma-1}N$ (rounded to the nearest integer) drivers are seeking for parking space under uncertainty conditions, in the induced equilibrium they end up paying the minimum possible cost, which is better than what they would pay under complete information about the parking demand.

\section{Related work}\label{Related_work}

Various aspects of the broader parking space search problem have been addressed in the literature. The centralized systems in \cite{Boehle08} and \cite{Wang11} monitor and reserve parking places within a city and are shown to better distribute the car traffic volume. The first system consists of four components: an on-board device located in the vehicle, intelligent network enabled lampposts, a sensor network that monitors the availability of parking places and a centralized parking place scheduling/reserving service; whereas the second architecture utilizes both the Internet and Wi-Fi technology to realize the monitoring and reservation task. Likewise, the authors in \cite{Mathur10} present, design, implement, and evaluate a system that generates a real-time map of parking space availability. The map is constructed at a central server out of aggregate data about parking space occupancy, collected by vehicles circulating in the considered area. In \cite{Lu09} Lu \etal~propose SPARK for reducing the parking search delay. SPARK consists of three distinct services, \ie real-time parking navigation, intelligent antitheft protection and friendly parking information dissemination, all making use of roadside network infrastructure. On the contrary, in \cite{Verroios11} and \cite{Caliskan06}, information about the location and vacancy of parking spots is opportunistically disseminated among vehicles. In \cite{Verroios11} the vehicle nodes solve a variant of the Time-Varying Travelling Salesman problem while dynamically planning the best feasible trip along all (reported to be) vacant parking spots. The proposed method is shown via simulation results to achieve near-optimal performance, yet it makes in advance the rather debatable implication that vehicles' trip follows all reported spots. Whereas, the work in \cite{Caliskan06} uses a topology-independent scalable information dissemination algorithm and takes simulation measurements for the profile of nodes' cache entities, under various dissemination criteria.
%A bandwidth efficient protocol for disseminating parking information has been proposed in \cite{Caliskan06}. This protocol uses a topology-independent scalable information dissemination algorithm for discovering vacant parking spots. Another approach for locating available parking spots, exploiting relevant information that is opportunistically disseminated among vehicles, is presented in \cite{Verroios11}. The authors rely on the Time-Varying Travelling Salesman problem to formulate the best feasible trip along all - reported as - vacant parking spots, accounting for the computational resource constraints of vehicular nodes.

%In the absence of, possibly centrally-driven, coordination, the drivers pursue selfishly to minimize the cost of access to parking facilities. However, this common intuitive decision, combined with the scarcity of public parking capacity in urban curbside of typical center areas, give rise to \emph{tragedy of commons} effects \cite{Hardin68} and highlight the game-theoretic dynamics behind the parking spot selection problem.
Game-theoretical dimensions in general parking applications explicitly acknowledged and treated in \cite{Arnott06}, \cite{Arbat07} and \cite{Ayala11}. In \cite{Arnott06}, the games are played among parking facility providers and concern the location and capacity of their parking facility as well as which pricing structure to adopt. Whereas, in the other two works, the strategic players are the drivers. In \cite{Arbat07}, which seeks to provide cues for optimal parking lot size dimensioning, the drivers decide on the arriving time at the lot, accounting for their preferred time as well as their desire to secure a space. In a work more relevant to ours, Ayala \etal~ in \cite{Ayala11} model centralized and distributed parking spot assignment methods. The drivers exploit (or not) information on the location of others to serve their self-interest, that is, occupy an available parking spot at the minimum possible travelled distance. Finally, economic effects, this time of congestion pricing, are analyzed in \cite {Larson10} by Larson \etal, through a queueing model for drivers who circulate in search for on-street parking.

Our work approaches the parking assistance service as an instance of the more general competitive contexts introduced in Section \ref{Introduction}. Rather than proposing a particular parking assistance scheme or algorithm, as the cited papers of the first paragraph do, we draw our attention on fundamental determinants of the parking search process efficiency. We formulate three variants of the parking resource selection game (strategic, Bayesian, and pre-Bayesian) to provide normative prescriptions for the impact of information on drivers's decisions. We abstract from spatiotemporal variations of demand and supply and consider generic yet realistic pricing schemes for the service in question. Our expectation is that the obtained results may be deemed relevant to a broader class of competitive service provision scenarios\footnote{It is tempting to draw parallels with the way auctioning mechanisms provide a powerful generic abstraction for treating network resource allocation problems, (\ie spectrum sharing, online sponsored search engines) \cite{Koutsopoulos10}.}.

\section{Conclusions - Discussion}\label{sec:discussion}
%In this paper, we draw on the resource selection framework to investigate game-theoretic abstractions of the parking search process, where drivers are called to decide whether to compete for the limited low-cost public parking space or head for more expensive private parking facilities, exploiting or not information for the overall parking demand.
In this paper, we have devised game-theoretic abstractions of the parking search process. Cheaper on-street parking space and more expensive private parking facilities are modeled as discrete resources and drivers as strategic players that decide on whether to compete or not for the former,
%cheaper but scarce and competitive on-street parking space or the more expensive private parking facilities, exploiting or not
under information of variable accuracy.
%for the overall parking demand.
Our results dictate, sometimes counterintuitive, conditions under which different charging policies and information amounts for the parking demand, reduce the inefficiency of the equilibrium strategies and favor the social welfare. The parking assistance service constitutes an instance of service provision within competitive networking environments, where more information
%not only cannon secure efficient service delivery,
does not necessarily improve the efficieny of service delivery but, even worse, may hamstring users' efforts to maximize their benefit. This result, obtained under the particular \emph{full rationality} assumptions, has direct practical implications since it challenges the need for more elaborate information mechanisms and promotes certain policies for information dissemination for the service provider.
%show that revealing only partial information on parking demand may resolve better the competition between individuals.
% Results on Price of Anarchy show that although the equilibrium strategies induce higher cost than the optimal, smart charging schemes manage to defeat this inefficiency. %Furthermore, a counterintuitive result suggests that drivers' innate uncertainty for the parking demand may conditionally favor the social welfare.
%\vspace{-1pt}

In the remaining of this Section, we iterate on two implicit assumptions behind the game models we introduced in Sections \ref{sec:cinfo} and \ref{sec:incinfo}, which can motivate further research work.

\textbf{Drivers' indifference among individual parking spots:} The formulation of the parking spot selection game assumes that drivers
%prefer the cheaper option of on-street parking over private parking lots but they
do not have any preference order over the $R$ on-street parking spots. This could be the case when these $R$ spots are quite close to eachother, resulting in practically similar driving times to them and walking times from them to the drivers' ultimate destinations.

When drivers avail preferences over different parking spots, we come up with an instance of \emph{one-sided matching (assignment) games}. The objective then is to detect an assignment that no subset of the drivers could be better off if they exchanged their allocated spots with eachother. At a theoretical level, the search is for mechanisms that treat all drivers fairly, are strategy-proof, \ie the drivers are motivated to advertise their true preference orders because they cannot gain by lying about them, and efficient in some Pareto-optimality sense. The random priority and the probabilistic serial assignment are two mechanisms that compromise these requirements \cite{Moulin01}; they could be incorporated nicely in a centralized system, whereby drivers would notify the central server about their destinations and the latter would derive their ordinal (or cardinal) preference orders and make the assignments.

\textbf{Drivers' rationality:} Yet stronger and long debatable is the assumption that drivers \emph{do} behave as fully rational decision-makers. Full (or global) rationality demands that the drivers can exhaustively analyze the possible strategies available to themselves and the other drivers, identify the equilibrium profile, and take the respective actions to realize it. Simon, already more than half a century ago \cite{Simon55}, challenged both the \emph{normative} and \emph{descriptive} capacity of the fully rational decision-maker, arguing that human decisions, are most often made under time, knowledge and computational constraints and draw on simpler \emph{cognitive heuristics}. Much research work has been undertaken since then on decision-making under \emph{bounded rationality}, primarily within the cognitive psychology community, which reports experimental evidence of deviation from the global rationality directives (see, for example, \cite{Shafir95} for a survey) and/or proposes relevant heuristics, \eg \cite{Goldstein02}.

Interestingly, the conclusions from these two modeling approaches are not necessarily in conflict and our results exemplify this. Figure \ref{fig:p_act} illustrates that the symmetric equilibrium probability $p^{NE}_{pub}$ decreases as the number of competing drivers grows (see the discussion in Section \ref{sec:results}).
%Experimental evidence from the cognitive psychology community, suggesting that decision agents more generally tend to be less competitive as the number of competitors increases, has been recently reported in \cite{Garcia09} under the term \emph{N-effect}.
A similar experimental result, suggesting that decision agents more generally tend to be less competitive as the number of competitors increases, even when the chances of success remain constant, has been recently reported from the cognitive psychology community in \cite{Garcia09} under the term \emph{N-effect}. The comparison of the two decision-making modeling approaches both in the context of the parking spot selection problem and more general decision-making contexts, is an interesting area worth of further exploration.
%\vspace{-10pt} 

%\input{conclusions}
%\input{Acknowledgments}

%
% The following two commands are all you need in the
% initial runs of your .tex file to
% produce the bibliography for the citations in your paper.
\bibliographystyle{abbrv}
\bibliography{parking}  % sigproc.bib is the name of the Bibliography in this case
% You must have a proper ".bib" file
%  and remember to run:
% latex bibtex latex latex
% to resolve all references
%
% ACM needs 'a single self-contained file'!
%

%APPENDICES are optional
%\balancecolumns
\appendix

\section{Pure equilibria of $\Gamma(N)$ via the potential function}

The game $\Gamma(N)$ is a congestion game; thus, it accepts an exact potential function $\Phi(\cdot)$ \cite{Monderer96}. As discussed in Section \ref{sec:cinfo}, the $2^N$ different action profiles of $\Gamma(N)$ can be grouped into $N+1$ different meta-profiles $(m,N-m), 0\leq m\leq N$, where $m$ is the number of drivers that decide to compete for on-street parking. Therefore, the potential function is effectively a function of $m$ and can be written as
\vspace{-10pt}
\begin{equation}
\Phi(a)\sim \Phi(m) = \sum_{j \in \mathcal{R}}\sum_{k=0}^{n_j(a)}w_j(k)
\vspace{-5pt}
\end{equation}
where $n_j(a)$ the number of drivers using resource $j$ under action profile $a$.
Therefore, for $m\leq R$,
\vspace{-5pt}
\small
\begin{eqnarray}
\Phi(m) &=& (N-m)c_{priv} + \sum_{k=1}^m c_{pub,s} \nonumber \\
&=& c_{pub,s}[\beta N-(\beta-1) m]
\end{eqnarray}
\vspace{-5pt}
\normalsize
whereas, for $m>R$
\small
\vspace{-5pt}
\begin{eqnarray}
\Phi(m) &=& (N-m)c_{priv} + \sum_{k=1}^m min\left(1,\frac{R}{k}\right)c_{pub,s}+ \nonumber \\
&& \left[1-min\left(1,\frac{R}{k}\right)\right]c_{pub,f}  \\
&=& c_{pub,s}\left[\beta N+\delta m-R(\gamma-1)+R(1-\gamma)\cdot \sum_{k=R+1}^m\frac{1}{k}\right] \nonumber \\
&=& c_{pub,s}\left[\beta N+\delta m-R(\gamma-1)+R(1-\gamma)\cdot (H_m-H_{R+1})\right] \nonumber
\end{eqnarray}
\normalsize
$H_n=\underline{\gamma}+log(n)+O(1/n)$ is the $n^{th}$ harmonic number; and $\underline{\gamma}$ the Euler constant. The pure NE strategies coincide with the local minima of the potential function. For $m \leq R$, $\partial\Phi(m)/\partial m< 0$ and the minimum is obtained at $m$, as derived in Theorem \ref{thm:pe}.

For $m>R$, demanding $\partial\Phi(m)/\partial m=0$ we get
\begin{equation}
\delta+\frac{R(1-\gamma)}{m_{NE}}=0
\end{equation}
which yields $m_{NE}=\frac{R(\gamma-1)}{\delta}=\sigma_0$, \ie the value we got through the analysis in Section \ref{sec:cinfo}.

\section{Proof of Theorem 3.2}%\ref{thm:sme}

%\begin{proof}
The symmetric equilibrium for $N\leq \sigma_0$ corresponds to the pure NE we derived in Theorem \ref{thm:pe}.
%  %if $N\leq\frac{R(b-1)}{b-a}$, the equilibrium condition argues that all drivers risk for public parking space. Therefore, the unique - symmetric - pure equilibrium profile with $\sigma_{pub,eq}=N$ matches the unique symmetric mixed equilibrium with $p_{pub,sym,eq}=1$.
%
To compute the equilibrium for $N>\sigma_0$ we invoke the condition that equilibrium profiles must fulfil
%%For the remaining cases, that is for $N>\frac{R(b-1)}{b-a}$, the equilibrium conditions bound $\sigma_{pub}$ between $\sigma_{0}-1$ and $\sigma_{0}$ and hence restrict the probability space to $0<p_{pub,eq},p_{priv,eq}<1$, where $\sigma_{0}=\frac{R(b-1)}{b-a}\in\mathbb{R}$. By (\ref{eq:fun_mini}), we have that
%%\small
%%\begin{eqnarray}
%%q(p_{eq})= \nonumber
%%\max(0,(p_{pub,eq}-1)\cdot c_{i}^{N}(public,p_{eq})+(1-p_{pub,eq})\cdot c_{i}^{N}(private,p_{eq}))+\\
%%\max(0,p_{pub,eq}\cdot c_{i}^{N}(public,p_{eq})-p_{pub,eq}\cdot c_{i}^{N}(private,p_{eq})) = 0 \nonumber
%%\end{eqnarray}
%%\normalsize
%%Equivalently,
%%\small
%%\begin{eqnarray}
%%(p_{pub,eq}-1)\cdot c_{i}^{N}(public,p_{eq})+(1-p_{pub,eq})\cdot c_{i}^{N}(private,p_{eq}) &\leq& 0 \text{ and } \nonumber \\
%%p_{pub,eq}\cdot c_{i}^{N}(public,p_{eq})-p_{pub,eq}\cdot c_{i}^{N}(private,p_{eq}) &\leq& 0 \nonumber
%%\end{eqnarray}
%%\normalsize
%%Since $0<p_{pub,eq}<1$, we have that

\small
\begin{equation}\label{eq:cost_equat}
c_{i}^{N}(public,p^{NE}) = c_{i}^{N}(private,p^{NE})
\end{equation}
\normalsize

namely, the costs of each pure action belonging to the support of the equilibrium mixed-action strategy
%%\ie they are employed with non-zero probability,
are equal. Hence, from (\ref{eq:costs_sym}) and (\ref{eq:cost_equat}) the symmetric mixed-action equilibrium
$p^{NE}=(p^{NE}_{pub},p^{NE}_{priv})$ solves the equation

\vspace{-8pt}
\small
\begin{eqnarray}\label{eq:sym_alter}
f(p)=-\beta+\sum\limits_{k=0}^{N-1}(\gamma-\min(1,\frac{R}{k+1})\cdot(\gamma-1)) B(k;N-1,p)=0
\end{eqnarray}
\normalsize
%\small
%%$f^{N}(p)=-a+\sum\limits_{\sigma_{pub}=0}^{N-1}(b-(\min(\frac{R}{\sigma_{pub}+1},1)\cdot(b-1)))\cdot %B_{\sigma_{pub}}(N-1,p_{pub})=0$,
%%and \\
%%$B_{\sigma_{pub}}(N-1,p_{pub})={N-1\choose \sigma_{pub}}p_{pub}^{\sigma_{pub}}(1-p_{pub})^{N-1-\sigma_{pub}}$
%%\normalsize
%
A closed-form expression for the equilibrium $p^{NE}_{pub}$ is not straightforward. However, it holds that:
\small
\begin{equation}\label{eq:sym_alter_bounds}
%\lim_{p\rightarrow 0}f(p)&=&-\gamma+1<0 \text{ and }\\
%\lim_{p\rightarrow 1}f(p)&=&\delta-\frac{\delta\sigma_0}{N} > 0
\lim_{p\rightarrow 0}f(p)=-\beta+1<0 \text{ and } \lim_{p\rightarrow 1}f(p)=\delta(1-\frac{\sigma_0}{N}) > 0
\end{equation}
\normalsize
and $f(p)$ is a continuous and strictly increasing function in $p$ since
\small
%\begin{eqnarray}
%f'(p) & = & \sum_{k=0}^{N-1}(\gamma-min(1,\frac{R}{k+1}(\gamma-1))B'(k;N-1,p) \nonumber \\
%      & > & \sum_{k=0}^{N-1}B'(k;N-1,p)= (\sum_{k=0}^{N-1}B(k;N-1,p))' = 0
%\end{eqnarray}
\begin{eqnarray}
f'(p)=\sum_{k=0}^{N-1}(\gamma-\min(1,\frac{R}{k+1})(\gamma-1))B'(k;N-1,p) \nonumber \\
> \sum_{k=0}^{N-1}B'(k;N-1,p)=0 \nonumber
\end{eqnarray}
\normalsize
Hence, $f(p)$ has a single solution. It may be checked with replacement that $f(\sigma_0/N)=0$.
%
%\begin{eqnarray}\label{eq:sym_alter_mono}
%f'%\frac{\vartheta{f}}{p})
%= \sum_{\sigma_{pub}=0}^{N-1}(s(\sigma_{pub}-p(N-1))) \nonumber \\
%                     =\sum_{\sigma_{pub}=0}^{p(N-1)-1}(s(\sigma_{pub}-p(N-1)))+\sum_{\sigma_{pub}=p(N-1)}^{N-1}(s(\sigma_{pub}-p(N-1)))
%
%\end{eqnarray}
%
%\end{proof}

\section{Proof of Theorem 4.1}%\ref{thm:sme_bayes}

%\begin{proof}
Inline with the reasoning in the proof of Theorem \ref{thm:sme}, any symmetric mixed-action equilibrium $p^{NE_{B}}$ must fulfill

\small
\begin{equation}\label{eq:cost_equat_bayes}
c_{i}^{N_B}(public,p^{NE_{B}}) = c_{i}^{N_B}(private,p^{NE_{B}})
\end{equation}
\normalsize
Since

\vspace{-8pt}
\small
\begin{eqnarray}\label{eq:costs_bayesian}
c_{i}^{N_B}(private,p)&=&c_{priv} \nonumber \\
c_{i}^{N_B}(public,p)&=&\sum_{n_{act}=0}^{N-1}c_{i}^{n_{act}+1}(public,p)B(n_{act};N-1,p_{act}) \nonumber
\end{eqnarray}
\normalsize
%Similarly, by (\ref{eq:cost_priv_sym_bayesian}), (\ref{eq:cost_pub_sym_bayesian}) and (\ref{eq:cost_equat_bayes})
a few algebraic manipulations suffice to derive that the symmetric mixed-action equilibrium $p^{NE_{B}}$ solves the equation

\vspace{-15pt}
\small
\begin{eqnarray}\label{eq:sym_alter}
h(p) &=& -\beta+\sum_{n_{act}=0}^{N-1}B(n_{act};N-1,p_{act})\cdot\nonumber \\
&& \sum\limits_{k=0}^{n_{act}}(\gamma-\min(\frac{R}{k+1},1)\cdot(\gamma-1)) B(k;n_{act},p)=0 \nonumber \\
\end{eqnarray}
\normalsize

The function $h(p)$ is continuous and strictly increasing in $p$ for all $p_{act} \in [0,1]$ since
\vspace{-3pt}
\small
\begin{eqnarray}
h'(p) & = & \sum_{n_{act}=0}^{N-1}B(n_{act};N-1,p_{act})\cdot\nonumber \\
&& \sum\limits_{k=0}^{n_{act}}(\gamma-\min(\frac{R}{k+1},1)\cdot(\gamma-1)) B'(k;n_{act},p) \nonumber \\
      & > & \sum_{n_{act}=0}^{N-1}B(n_{act};N-1,p_{act})\sum\limits_{k=0}^{n_{act}}B'(k;n_{act},p) \nonumber \\
      & = & \sum_{n_{act}=0}^{N-1}B(n_{act};N-1,p_{act})(\sum\limits_{k=0}^{n_{act}}B(k;n_{act},p))' = 0 \nonumber
\end{eqnarray}
\normalsize

since the weights of the rightmost Binomial coefficients in the second line are not smaller than one.
\vspace{-3pt}
\small
\begin{equation}
\gamma-\min(\frac{R}{k+1},1)\cdot(\gamma-1)> 1, \forall k \in [0,N-1] \nonumber
\end{equation}
\normalsize

Likewise, for $p_{act} \in [0,1]$, its leftmost value is
\small
\begin{equation}\label{eq:sym_low_bound}
\lim_{p\rightarrow 0}h(p)=-\beta+1<0
\end{equation}
\normalsize
whereas its rightmost value is
\vspace{-5pt}
\small
\begin{eqnarray}\label{eq:sym_upper_bound}
\lim_{p\rightarrow 1}h(p) &=& -\beta+\sum_{n_{act}=0}^{N-1}B(n_{act};N-1,p_{act})\cdot \nonumber \\
&& (\gamma-\min(1,\frac{R}{n_{act}+1})\cdot(\gamma-1)) \nonumber \\
&=& f(p_{act})
\end{eqnarray}
\normalsize

In the proof of Theorem \ref{thm:sme} we showed that the function $f(p)$ is strictly increasing in $p$ and has a single solution $p=\sigma_0/N$. Therefore, as long as $p_{act} \in [0,\sigma_0/N)$, $\lim_{p\rightarrow 1}h(p) < 0$, and $c_{i}^{N_B}(public,p) < c_{i}^{N_B}(private,p)$ $\forall p \in (0,1)$; namely, it is a dominant strategy for all drivers to compete for on-street parking. On the contrary, for $p_{act} \in [\sigma_0/N,1]$, $\lim_{p\rightarrow 1}h(p)$ gets positive values and $h(p)=0$ has a single solution $p=\frac{\sigma_{0}}{Np_{act}}$ (can be checked with replacement).
%
%%Now, for $p_{act}<\frac{\sigma_{0}}{\bar{N}}$, $h^(p)<0  \forall p$
%%The upper bound of
%%
%%For $N>\sigma_{0}$ ,the limits of $h(p)$ as $p$ approaches $0$ and $1$ ar
%%
%%However, we have already shown that for $\bar{N}>\sigma_{0}$, $f^{\bar{N}}(p_{act})$ is a strictly increasing function in $p_{act}$ and that it has a single solution at $p_{act}=\frac{\sigma_{0}}{\bar{N}}$. Hence, for $p_{act}<\frac{\sigma_{0}}{\bar{N}}$, $h^{\bar{N}}(p)<0$, for all $p$, which means that the action $public$ is dominant.
%%However, for $p_{act}\geq\frac{\sigma_{0}}{\bar{N}}$, $\lim_{p\rightarrow 1}h^{\bar{N}}(p)\geq0$. Furthermore, $h^{\bar{N}}(p)$ is a strictly increasing function in $p$ since
%%
%%Therefore, $h^{\bar{N}}(p)$ has a single solution. It may be checked (\textbf{with replacement}) as well that $h(\frac{\sigma_{0}}{\bar{N}p_{act}})=0$.
%
%%For $\bar{N}\leq\sigma_{0}$, the function $h^{\bar{N}}(p)$ is negative for all $p$ and hence again the action $public$ %dominates over $private$.
%\end{proof} 
%\section{Headings in Appendices}
%\balancecolumns
% That's all folks!
\end{document}